\def\shadowbox{\hbox{\rule[-0.0ex]{0.1ex}{1.2ex}%
\hspace{-0.1ex}\rule[-0.0ex]{1.2ex}{0.1ex}%
\hspace{0.0ex}\rule[-0.0ex]{0.1ex}{1.2ex}\hspace{-1.3ex}%
\rule[1.15ex]{1.25ex}{0.1ex}\hspace{-0.0ex}\rule[-0.25ex]{0.3ex}{1.1ex}%
\hspace{-1.2ex}\rule[-0.25ex]{1.1ex}{0.25ex}}}
\def\qed{\ifmmode \hbox{\hfill\shadowbox}
     \else \hphantom{x}\hfill\shadowbox \fi}
\newtheorem{theorem}{Theorem}[section]
\newtheorem{lemma}[theorem]{Lemma}
\newtheorem{definition}[theorem]{Definition}
\newtheorem{proposition}[theorem]{Proposition}
\newtheorem{corollary}[theorem]{Corollary}
\def\Z {\mathbb{Z}}
\def\N {\mathbb{N}}
\def\F {\mathcal{F}}
\def\E {\mathcal{E}}
\def\T {\mathbb{T}}
\def\supp {\textnormal{supp}}
\def\N {\mathbb{N}}
\def\C {\mathbb{C}}
\def\Izo {\int_0^1}
\def\skI {\sum_{k\in I}}
\def\kt {\tilde{k}}
\def\M {\mathcal{M}}
\def\MrI {\mathcal{M}(w_r,I)}
\def\half {\frac{1}{2}}
\def\wo {w^{(1)}}
\def\sign {\textnormal{sign}}
\def\Ex {\textnormal{Ex}}
\def\calC {\mathcal{C}}
\def\span {\textnormal{span}}
\def\Interp {\textnormal{Interp}}
\def\Lt {L^2([0,1])}
\def\esssup {\textnormal{ess.sup}}
\def\pk {\phi_k}
\def\PE {\Psi_E}
\def\PAPR {\textnormal{PAPR}}
\begin{document}
\title{On the Peak-to-Average Power Ratio Reduction Problem for Orthogonal Transmission Schemes}

\author[1]{Holger Boche\thanks{Theresienstr. 90, 80333 M\"unchen, Germany. Email: boche@tum.de}}
\author[2]{Brendan Farrell\thanks{1200 E. California Blvd, Pasadena, CA 91125, USA. Email:farrell@cms.caltech.edu\\
This work was completed while both authors were with the Technische Universit\"at Berlin and 
was supported by the German Science Foundation (DFG) under Project BO 1734/18-1.}}
\affil[1]{Technische Universit\"at M\"unchen}
\affil[2]{California Institute of Technology}
\date{}
\maketitle

\begin{abstract}
High peak values of transmission signals in wireless communication systems lead to wasteful energy consumption and out-of-band radiation. 
However, reducing peak values generally comes at the cost some other resource. 
We provide a theoretical contribution towards understanding the relationship between peak value reduction and the resulting cost in information rates. 
In particular, we address the relationship between peak values and the proportion of transmission signals allocated for 
information transmission when using a strategy known as tone reservation. 
We show that when using tone reservation in  both OFDM and DS-CDMA systems, if a Peak-to-Average Power Ratio criterion is always satisfied, 
then the proportion of transmission signals that may be allocated for information transmission must tend to zero. 
We investigate properties of these two systems for sets of both finite and infinite cardinalities. 
We present properties that OFDM and DS-CDMA share in common as well as ways in which they fundamentally differ. 
\end{abstract}

\begin{section}{Introduction}

Recent studies by two consulting firms have estimated  that 2\% of global $\textnormal{CO}_2$ emissions are attributable to the use of 
information and communication technology, 
a contribution comparable to aviation \cite{BLO08,Gar10,PST08}.  
While this impact is already significant, the amount of information communicated electronically is growing exponentially, 
and the emission percentage is expected to increase to 3\% by 2020. 
A large portion of the energy consumption causing emissions is due to wireless communications, 
and within wireless systems, a significant portion of the energy consumption occurs at the amplifiers. 
Communications companies must deal with a trade-off between expensive amplifiers that are efficient, a capital expenditure, or 
inexpensive amplifiers and high energy costs, an operating cost. 
The increase in the volume of wireless communication requires using systems that place more and more individual signals in a frequency band, 
and this, inherently, leads to larger signal amplitudes. 
Thus, from both an environmental as well as a financial perspective, the interplay between information capacity, signal peak values, amplifier 
performance and energy consumption is an essential area for research. 
The current understanding is that amplifiers are more efficient when transmitting signals that have smaller peak values. 
See \cite{RACKPPSS} for an overview. 

Within the broader question of power consumption, amplifier efficiency and capacity in mind, 
we focus in this paper on the relationship between signal peak values and the proportion of signal resources that 
can be allocated for information transmission. 
We address the balance between allocating resources towards reducing signal peak values and allocating them for transmitting information. 
To the best of the authors' knowledge, there has been very little theoretical work done here, and little is known about the fundamental 
relationships between these various aspects of wireless communication. 

We focus our attention on two of the most important contemporary communications systems, namely 
Orthogonal Frequency Division Multiplexing (OFDM) and Direct Sequence-Code Division Multiple Access (DS-CDMA). 
Both of these use a classical basis for transmission signals: 
OFDM uses the Fourier basis and (DS-CDMA) uses the Walsh basis. 
In either case, coefficients are chosen to represent a message, 
and the linear combination corresponding to these coefficients is transmitted. 
As part of transmission the signal passes through an amplifier. 
Every amplifier has a threshold beyond which it cannot linearly amplify the signal, 
but distorts or ``clips'' it. 
We say that a signal is clipped at magnitude $M$ if the signal is left undisturbed where its magnitude is less than 
$M$ and its magnitude is reduced to $M$ where it is greater than $M$ while leaving the phase unchanged. 
Both distortion and clipping take place only above some threshold, so that one may say that only the peaks are affected.

Transmission signals generally also satisfy a frequency band requirement, 
that is their Fourier transforms are supported in a specified region or the signal is band-limited. 
This gives that the signals are analytic, i.e. infinitely differentiable with convergent power series.  
If the distorted signal differs from the original only where the original magnitude exceeds a threshold, 
then the two signals agree where the original is below that threshold. 
If the distorted function is also band-limited, then it is analytic and the difference between original and distorted is 
zero on an open interval.
A basic theorem of complex analysis then implies that their difference is identically zero. 
Thus, the distorted function cannot be band-limited. 
That is, if clipping or distortion occurs, then the amplified signal is not band-limited, and out-of-band radiation occurs. 
This gives the motivation for the approach taken in this paper.  

If the transmitted signal is not band-limited it interferes with other frequency bands. 
If one is interested in capacity or error-rates without a requirement that out-of-band radiation does not occur, 
then a probabilistic approach to the peak behavior is appropriate. 
However, in many instances out-of-band radiation may be strictly prohibited.

With the transition from analog to digital television transmission, the improved efficiencies allowed new frequency bands to be 
redistributed, in particular for wireless communications. 
This is commonly called the ``Digital Dividend''. 
Very strict quality of service requirements have been imposed on operators for some of these bands, 
and a percentage based compliance is insufficient.  
An example is wireless microphones, where users are of course very sensitive to a disruption of service. 
In these cases statistical models are inadequate. 

The strategy we consider here is known in the OFDM setting as Tone Reservation. 
This method was introduced in~\cite{TC98,Tel99}; an overview is given in~\cite{HL05,Lit07} and a survey of 
recent advances is given in~\cite{WFBLN12}.
We will apply this strategy as well to the CDMA setting. 
Here one separates the available transmission signals into two subsets. 
Coefficients that carry the message are then applied to signals in one subset, and then coefficients are determined for the 
signals in the second subset, such that the peak of the entire composite signal is ideally below a threshold. 
There are other methods to reduce the peak value, such as selected mapping, clipping and filtering and selected mapping. 
However, tone reservation is canonical in that the coefficients to be transmitted are not altered in any way, 
the auxiliary coefficients may simply be ignored by the receiver, 
and there is no additional overhead to transmission. 
We note that the literature on the these topics is enormous with some papers cited several hundred times. 
Extensive references of the most important works are available in~\cite{Lit07,WFBLN12}.

Here we address how the ratio of information-bearing signals to compensation signals behaves with respect to a peak threshold 
as the total number of signals available increases. 
The two main results presented here, Theorem~\ref{discretenotsolvable} in the Fourier-OFDM case and Theorem~\ref{mainwalsh} in the Walsh-CDMA case, 
show that if a peak threshold must always be satisfied, then the proportion of signals that may be used to carry information converges to zero. 
While the OFDM and CDMA share this property in common, they behave quite differently in other significant ways, which we discuss in later sections. 
These main results are coupled with two other main points. 
This first is a relationship between what we will call \emph{solvability} and a norm equivalence, and the second is a fundamentally different 
behavior when these questions are addressed for sets of finite or infinite cardinality. 
The relationship between solvability and the norm equivalence is presented in Section~\ref{motif}. 
The Fourier-OFDM case is addressed in Section~\ref{Discrete}, 
and the density result for the Walsh-CDMA case is presented in Section~\ref{walsh}. 
Section~\ref{RW} gives further properties of the Walsh system and, in particular, emphasizes their localized nature. 

While this paper addresses Fourier-OFDM and Walsh-CDMA systems, we note that recent results for the peak-value behavior of 
$sinc$ or single-carrier systems were obtained in~\cite{BFLW12}. 
There it is shown that the expected peak of a random linear combination of shifted $sinc$ functions grows with the number of 
such functions. 
This underscores the prevalence of high peak amplitudes in communications systems. 

\end{section}

\begin{section}{Solvability and a Norm Equivalence for Orthonormal Systems}\label{motif}
\begin{subsection}{Introductory Facts}
We first formalize our problem and then introduce an important aspect of our approach. 
We begin with the following definition, where, without loss of generality, we take $[0,1]$ as the symbol interval.  
\begin{definition}
Given a set of orthonormal functions $\{\phi_n\}_{n=1}^N \subset \Lt$, we define the \emph{Peak-to-Average Power Ratio (PAPR)} of a 
set of coefficients $a\in \C^N$ by
\begin{equation}
\PAPR(\{\phi_n\}_{n=1}^N,a)=\esssup_{t\in [0,1]}\frac{|\sum_{n=1}^N a_n\phi_n(t)|}{\|a\|_{l^2_N}}.
\end{equation}
\end{definition}
The following simple proposition shows that PAPR values of order $\sqrt{N}$ can occur for any orthonormal system. 
We include the proof just for the sake of completeness. 
\begin{proposition}\emph{(Theorem 6 in \cite{BP07})}\label{sqN}
Let $\{\phi_k\}_{k=1}^N$ be $N$ orthonormal functions in $L^2([0,1])$. 
Then there exists a sequence $a\in l^2_N$ with norm $\|a\|_{l^2_N}=1$ such that 
\begin{equation*}
\esssup_{t\in [0,1]}|\sum_{n=1}^N a_n \phi_n(t)|\geq \sqrt{N}. 
\end{equation*}
\end{proposition}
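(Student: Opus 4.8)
The plan is to reduce the problem to a pointwise statement about the ``diagonal'' $K(t) = \sum_{n=1}^N |\phi_n(t)|^2$ and then to choose coefficients that align with the vector of values $\Phi(t_0) = (\phi_1(t_0),\dots,\phi_N(t_0))$ at a well-chosen point $t_0$. First I would exploit orthonormality to compute the average of $K$. Since $\int_0^1 |\phi_n(t)|^2\,dt = \|\phi_n\|_{L^2}^2 = 1$ for each $n$, interchanging the finite sum with the integral gives $\int_0^1 K(t)\,dt = \sum_{n=1}^N 1 = N$. Because the average of $K$ over the unit interval equals $N$, the set $E = \{t\in[0,1] : K(t)\geq N\}$ must have positive Lebesgue measure: if $K < N$ held almost everywhere, then $\int_0^1 (N-K) > 0$, contradicting $\int_0^1(N-K)=0$. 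In particular $\esssup_t K(t) \geq N$.

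Next, at a point $t_0 \in E$ I would take the coefficients realizing equality in Cauchy--Schwarz against $\Phi(t_0)$, namely $a_n = \overline{\phi_n(t_0)}/\sqrt{K(t_0)}$, which is legitimate since $K(t_0)\geq N > 0$. This makes $\|a\|_{l^2_N}^2 = \sum_{n=1}^N |\phi_n(t_0)|^2 / K(t_0) = 1$, and the combination evaluated at $t_0$ collapses to $\sum_{n=1}^N a_n\phi_n(t_0) = K(t_0)/\sqrt{K(t_0)} = \sqrt{K(t_0)} \geq \sqrt N$. The claimed bound then follows as soon as one knows that the essential supremum of $|\sum_n a_n\phi_n|$ is at least its value at $t_0$.

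The main obstacle is precisely this last point, because the $\phi_n$ are only $L^2$ functions: the single point $t_0$ may be a null set, so the value at $t_0$ is not automatically dominated by the essential supremum. I would close this gap by additionally requiring $t_0$ to be a point of approximate continuity of the vector-valued map $\Phi$ (equivalently of the fixed representative of $g = \sum_n a_n\phi_n$, which depends linearly on $\Phi$). Almost every point of $[0,1]$ has this property, and the full-measure set of approximate continuity points meets the positive-measure set $E$, so a suitable $t_0\in E$ exists. Approximate continuity then guarantees that for every $\epsilon > 0$ the set $\{t : |g(t)| > |g(t_0)| - \epsilon\}$ has positive measure, whence $\esssup_t |g(t)| \geq |g(t_0)| = \sqrt{K(t_0)} \geq \sqrt N$. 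If one is content to read the supremum pointwise, this technical refinement is unnecessary and the computation at $t_0$ finishes the proof directly; I expect the intended argument to be the short pointwise version, with the measure-theoretic remark supplied only to make the essential-sup statement fully rigorous.
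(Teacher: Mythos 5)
Your proposal is correct and follows essentially the same route as the paper's proof: use orthonormality to get $\int_0^1 \sum_{n=1}^N |\phi_n(t)|^2\,dt = N$, pick a point $t_0$ where this sum is large, and choose $a_n = \overline{\phi_n(t_0)}/\bigl(\sum_{n=1}^N |\phi_n(t_0)|^2\bigr)^{1/2}$ so that Cauchy--Schwarz holds with equality at $t_0$. Your two refinements in fact make the argument tighter than the paper's own version: working on the positive-measure set $\{t : \sum_{n} |\phi_n(t)|^2 \geq N\}$ yields a single coefficient vector achieving $\sqrt{N}$ exactly, whereas the paper produces for each $\epsilon>0$ a different vector achieving only $\sqrt{N-\epsilon}$ and then appeals to the arbitrariness of $\epsilon$; and your approximate-continuity remark legitimately converts the pointwise value at $t_0$ into a lower bound on the essential supremum, a step the paper elides by writing $\sup$ rather than $\esssup$ in its concluding display.
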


\begin{proof}
First we observe 
\begin{equation*}
N=\frac{1}{2\pi}\int_{-\pi}^{\pi} \sum_{n=1}^N |\phi_n(t)|^2dt\leq \esssup_{t\in [0,1]}\sum_{n=1}^N |\phi_n(t)|^2.
\end{equation*}
So, for any $\epsilon>0$, there exists $t_0\in[0,1]$ such that all $\{\phi_k\}_{k=1}^N$ are defined at $t_0$ and 
\begin{equation}
N-\epsilon\leq \sum_{n=1}^N |\phi_n(t_0)|^2.\label{4444}
\end{equation}
Now set 
\begin{equation*}
a_n=\frac{\overline{\phi_n(t_0)}}{\sqrt{  \sum_{n=1}^N |\phi_n(t_0)|^2}}.
\end{equation*}
Using inequality~(\ref{4444}) we have  
\begin{eqnarray*}
\sup_{t\in[0,1]} \sum_{n=1}^Na_n \phi_n(t)&\geq &\sum_{n=1}^N  a_n \phi_n(t_0)
= \frac{ \sum_{n=1}^N\overline{\phi_n(t_0)}\phi_n(t_0) }{ \sqrt{  \sum_{n=1}^N |\phi_n(t_0)|^2}  }\\
&=&\left( \sum_{n=1}^N |\phi_n(t_0)|^2\right)^{1/2}\geq  \sqrt{N-\epsilon}.
\end{eqnarray*}
Since $\epsilon$ is arbitrary, we have proved the proposition. 
\end{proof}
Thus, for any orthonormal basis $\{\phi_n\}^\infty_{n=1}$, we have 
\begin{equation}
  \sup_{\|a\|_{l^2}=1}\PAPR(\{\phi_n\}^N_{n=1},a)\geq \sqrt{N}. 
\end{equation}
In fact, $\sqrt{N}$ is also a bound on the PAPR for both the OFDM and the DS-CDMA systems. 
  Since the transmission signals in each of these cases are uniformly bounded by one, this follows from 
applying the Cauchy-Schwarz inequality pointwise to the linear combination. 
Therefore, OFDM does not offer any advantages as far as worst-case performance for PAPR. 
Proposition~\ref{sqN} shows that the upper bound on PAPR for these two systems is also a lower bound on PAPR for all orthonormal systems.  
\end{subsection}

\begin{subsection}{Solvability and a Norm-Equivalence}

We recall from the Introduction that the strategy addressed in this paper is to reserve one subset of orthonormal functions for carrying the information-bearing 
coefficients and to determine coefficients for the remaining orthonormal functions, so that the combined sum of functions has 
a small peak value. 
We formalize this in the following definition. 
\begin{definition}\label{solvable}
The PAPR reduction problem is \emph{solvable} for the orthonormal system $\{\phi_n\}_{n=1}^\infty$ and the subset $K\subset \N$ with 
constant $C_{\Ex}$ if for every $a\in l^2(K)$ there exists $b\in l^2(K^c)$, satisfying $\|b\|_{l^2(K^c)}\leq C_{\Ex} \|a\|_{l^2(K)}$ such that 
\begin{equation}
\esssup_{t\in [0,1]} \left|\sum_{n\in K}a_n\phi_n(t)+\sum_{n\in K^c}b_n\phi_n(t)\right|\leq C_{\Ex} \|a\|_{l^2(K)}.
\end{equation}
\end{definition}

We may view the map from the coefficient vector $a$ to 
a function with a small peak as an \emph{extension operator}. 
This operator is a map from $l^2(K)$ to $\Lt$ given by 
\begin{equation}
E_Ka=\sum_{n\in K}a_n\phi_n(t)+\sum_{n\in K^c}b_n\phi_n(t).
\end{equation}
Note that this map is not necessarily unique and is generally not linear; 
we will also \emph{not} discuss the construction of such a map. 
Nonetheless, since the map gives a correspondence between $l^2(K)$ and $\Lt$ we refer to it as the extension operator. 
Thus, we equivalently say that the PAPR reduction problem is solvable for $\{\phi_n\}_{n=1}^\infty$ and $K$ with extension norm  
$C_{\Ex}$ if there exists an extension operator $E_K$ such that 
\begin{equation}
\| E_K\|_{l^2(K)\rightarrow \Lt}\leq C_{\Ex}.
\end{equation}
Note that we are only interested in the existence of \emph{an} extension operator, and that uniqueness is not part of the discussion. 
Clearly the operator is generally not linear. 

The main results of this paper concern the proportion of signals that may be used for information transmission under a peak value constraint. 
Our approach, however, builds on a further point, 
namely a relationship between PAPR reduction and an $L^1-L^2$ norm equivalence. 
Given an orthonormal system $\{\phi_n\}_{n=1}^\infty$ for $\Lt$ and a subset $K\subset \N$, we define 
\begin{equation}
X:=\left\{ f:f\in L^1([0,1]),\;f=\sum_{n\in K}a_n\phi_n\right\}.
\end{equation}
The torus is defined by 
\begin{equation*}
\T=\{z\in \C:\;|z|=1\}. 
\end{equation*}

\begin{theorem}\emph{(\cite{BF10})}\label{equivalence}
Let $\{\pk\}_{k\in\N}$  be an orthonormal basis for $\Lt$, let $K$ be a subset of $\N$, and let $X$ be as just defined.  
The PAPR problem is solvable for the pair $K$ and $\{\pk\}_{k\in\N}$  with extension norm $C_{\Ex}$
if and only if 
\begin{equation}
\|f\|_{L^2(\T)}\leq C_{\Ex}\|f\|_{L^1(\T)}\label{ne}
\end{equation}
for all $f\in X$. 
\end{theorem}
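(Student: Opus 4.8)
My plan is to prove both implications by Hahn–Banach duality, exploiting the identification $L^\infty([0,1])=(L^1([0,1]))^*$ together with the self-duality of $\Lt$. Throughout I write $Y:=\overline{\span}\{\phi_n:n\in K\}\subset\Lt$, and I first record a convenient reformulation of solvability: the PAPR problem is solvable for $K$ and $\{\pk\}_{k\in\N}$ with constant $C_{\Ex}$ if and only if for every $a\in l^2(K)$ there is a $g\in L^\infty([0,1])$ whose coefficients satisfy $\langle g,\phi_n\rangle=a_n$ for all $n\in K$ and $\esssup|g|=\|g\|_{L^\infty}\le C_{\Ex}\|a\|_{l^2}$. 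Indeed, setting $b_n=\langle g,\phi_n\rangle$ for $n\in K^c$, Bessel's inequality and $|[0,1]|=1$ give $\|b\|_{l^2(K^c)}\le\|g\|_{\Lt}\le\|g\|_{L^\infty}$, so the extra norm bound on $b$ in Definition~\ref{solvable} is automatic once the $L^\infty$ bound on $g$ holds; conversely the function produced by an extension operator is exactly such a $g$.

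For the implication (\ref{ne}) $\Rightarrow$ solvability, I fix $a\in l^2(K)$ and put $f_a=\sum_{n\in K}a_n\phi_n\in Y\subset X$. On $Y$, regarded as a subspace of $L^1([0,1])$, I define the linear functional $\mu(v)=\langle f_a,v\rangle_{\Lt}$. For $v\in Y$ with $\|v\|_{L^1}\le 1$, the Cauchy–Schwarz inequality together with the norm equivalence (\ref{ne}) yields $|\mu(v)|\le\|f_a\|_{\Lt}\,\|v\|_{\Lt}\le C_{\Ex}\|f_a\|_{\Lt}\,\|v\|_{L^1}\le C_{\Ex}\|a\|_{l^2}$, so $\mu$ is bounded on $(Y,\|\cdot\|_{L^1})$ with norm at most $C_{\Ex}\|a\|_{l^2}$. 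Extending $\mu$ to all of $L^1([0,1])$ by Hahn–Banach without increasing its norm and representing the extension by a function $g\in L^\infty([0,1])$ via $L^\infty=(L^1)^*$ gives $\|g\|_{L^\infty}\le C_{\Ex}\|a\|_{l^2}$; testing against $\phi_n$ for $n\in K$ shows $\langle g,\phi_n\rangle=a_n$, which by the reformulation above is precisely solvability with constant $C_{\Ex}$.

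For the reverse implication I would fix $f\in X$ with $\|f\|_{L^1}=1$ and estimate $\|f\|_{\Lt}$ by duality against finite linear combinations. For a finite $v=\sum_{n\in K}v_n\phi_n$ with $\|v\|_{\Lt}\le 1$ one has $\int f\,\overline v=\sum_{n\in K}\langle f,\phi_n\rangle\,\overline{v_n}$; invoking solvability for the coefficient vector of $v$ produces $g\in L^\infty$ with $\langle g,\phi_n\rangle=v_n$ for $n\in K$ and $\|g\|_{L^\infty}\le C_{\Ex}$. If one knew that $\int f\,\overline{g-v}=0$, then $|\int f\,\overline v|=|\int f\,\overline g|\le\|f\|_{L^1}\|g\|_{L^\infty}\le C_{\Ex}$, and taking the supremum over all such $v$ would give $\|f\|_{\Lt}\le C_{\Ex}=C_{\Ex}\|f\|_{L^1}$, which is (\ref{ne}). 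The argument therefore hinges on the biorthogonality statement that $\int f\,\overline w=0$ whenever $f\in X$ and $w\in L^\infty$ has $\langle w,\phi_n\rangle=0$ for all $n\in K$.

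That biorthogonality is the main obstacle, and the point at which the merely-$L^1$ nature of $X$ bites: when $f\in Y=X\cap\Lt$ it is immediate from Parseval and the disjointness of the coefficient supports, but for general $f\in X\setminus\Lt$ the pairing $\int f\,\overline w$ cannot be evaluated termwise, since the partial sums $\sum_{|n|\le N}\langle w,\phi_n\rangle\phi_n$ converge to $w$ only in $\Lt$ and not in $L^\infty$. The way I would resolve it is to show that finite $K$-combinations are dense in $X$ in the $L^1$ norm; granting this, I choose $f_j\in Y$ with $f_j\to f$ in $L^1$, note $\int f_j\,\overline w=0$ by the $\Lt$ case, and pass to the limit using only $w\in L^\infty$ and $L^1$ convergence. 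For the two systems of interest this density is supplied by a summability kernel that preserves the spectrum—Fej\'er (Ces\`aro) means in the Fourier–OFDM case and dyadic (martingale) partial sums in the Walsh–CDMA case—each of which converges in $L^1$ for every $L^1$ function while leaving the support of the coefficients inside $K$. Establishing this approximation, rather than the duality bookkeeping, is where the real work lies.
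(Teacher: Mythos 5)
Your proposal runs on the same engine as the paper's proof. Your direction (\ref{ne}) $\Rightarrow$ solvability is the paper's part (ii) almost verbatim: a coefficient functional bounded in the $L^1$ norm, a Hahn--Banach extension, and representation of the extension by an element of $L^\infty(\T)=(L^1(\T))^*$ whose coefficients agree with $a$ on $K$ (your restriction of the functional to $Y$ rather than $X$, and the conjugation bookkeeping, are immaterial). Your converse is the paper's part (i) transposed: the paper applies solvability to $f$'s own normalized coefficient vector and pairs $f$ against that one extension, while you apply it to arbitrary finite test vectors $v$ and take a supremum; this is the same duality computation.

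The one substantive point is the biorthogonality obstacle you isolate, and here you should know that the paper does not circumvent it --- it commits it silently. The second equality in the paper's part (i),
\begin{equation*}
\left|\sum_{k\in K}a_k\overline{c}_k + \sum_{k\notin K}b_k\overline{c}_k \right|=\left|\frac{1}{2\pi}\int_\T f(t)\overline{E_K s(t)}\,dt\right|,
\end{equation*}
is exactly the termwise pairing of the merely-$L^1$ function $f$ against the $L^\infty$ function $E_Ks$: it is Parseval when $f\in L^2$, but it is precisely what needs justification when $f\in X\setminus L^2$, and that is the relevant case since the whole content of (\ref{ne}) is that $X\subset L^2$. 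So your proposal is not behind the paper; it makes explicit, and for the Fourier and Walsh systems repairs, a step the paper assumes. Two remarks on the repair. First, it is system-specific: the theorem is stated for an \emph{arbitrary} orthonormal basis of $\Lt$, where no Fej\'er-type or martingale kernel exists and where, if the $\phi_n$ are unbounded, the coefficients of an $L^1$ function are not even defined; in that generality the statement only becomes meaningful if one reads the definition of $X$ as the $L^1$-closure of the finite $K$-span (your density lemma says exactly that this reading agrees with the paper's for Fourier and Walsh, which is all the paper uses downstream). Second, once density is in hand you can bypass biorthogonality entirely: prove (\ref{ne}) for finite $K$-combinations, where Parseval is unproblematic, then for $f\in X$ take finite $f_j\to f$ in $L^1$; the inequality applied to $f_j-f_k$ makes $(f_j)$ Cauchy in $L^2$, its $L^2$ limit must coincide with $f$, and (\ref{ne}) passes to the limit. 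That route is shorter than the limiting argument you sketched and is valid for any orthonormal system under the closure definition of $X$.
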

While Theorem~\ref{equivalence} is proved in~\cite{BF10}, we include the proof here  so that the role of the Hahn-Banach 
Theorem is apparent. In particular, the proof relies on the existence of a function, denoted $r$ below, 
for which, in general, there does not exist a method to construct. 

\begin{proof}
i.) Assume that the PAPR problem is solvable. 
Then for all $s(t)=\sum_{k\in K} a_k\pk(t)$, $\|a\|_{l^2(\Z)}\leq 1$, 
\begin{equation}
\|E_K a\|_{L^\infty(\T)} \leq C_{\Ex}\|s\|_{L^2(\T)}\leq C_{\Ex}.
\end{equation}
Since $L^\infty(\T)\subset L^2(\T)$, 
\begin{equation}
E_Ks=\sum_{k\in K}a_k\pk+\sum_{k\in \N\backslash K}b_k\pk.
\end{equation}
Let $f\in X$, $f(t)=\sum_{k\in K} c_k\pk(t)$, be arbitrary. 
Then 
\begin{eqnarray*}
\left|\sum_{k\in K}a_k\overline{c}_k\right|&=&\left|\sum_{k\in K}a_k\overline{c}_k + \sum_{k\in \Z\backslash K}b_k\overline{c}_k \right|\\
&=&\left|\frac{1}{2\pi}\int_\T f(t)\overline{E_K s(t)}dt\right|\\
&\leq& \|f\|_{L^1(\T)}\|E_Ks\|_{L^\infty(\T)}\\
&\leq& C_{\Ex}\|f\|_{L^1(\T)}.
\end{eqnarray*}
Set 
\begin{equation*}
a_k=\Big\{
\begin{array}{cl}
\frac{c_k}{\|c\|}_{l^2}&c_k\neq 0\\
0&c_k=0
\end{array}.
\end{equation*}
Then $\|f\|_{L^2(\T)}=\|c\|_{l^2}=|\sum_{k\in K}a_k\overline{c}_k| \leq C_{\Ex}\|f\|_{L^1(\T)}$.

ii.) Assume $\|f\|_{L^2(\T)}\leq C_{\Ex}\|f\|_{L^1(\T)}$ for all $f\in X$. 
Let $a\in l^2(\Z)$ be a sequence, supported in $K$, with only finitely many nonzero terms satisfying $\|a\|_{l^2(\Z)}\leq 1$. 
Set $s(t)=\sum_{k\in K}a_k \pk(t)$. 
For $f\in X$, $f(t)=\sum_{k\in K}c_k \pk(t)$,    define the functional $\Psi_a$  by 
\begin{equation*}
\Psi_a f=\sum_{k\in K}a_k\overline{c}_k.
\end{equation*}
Since 
\begin{equation*}
|\Psi_a f|\leq \|a\|_{l^2(\Z)}\|c\|_{l^2(\Z)}\leq \|f\|_{L^2(\T)}\leq C_{\Ex}\|f\|_{L^1(\T)}, 
\end{equation*}
$\Psi_a$ is continuous on $X$. 
Since $X$ is a closed subspace of $L^1(\T)$, 
by the Hahn-Banach Theorem,
the functional $\Psi_a$ has the extension $\PE$ to all of $L^1(\T)$, 
where $\|\Psi_a\|=\|\PE\|$.  
The dual of $L^1(\T)$ is $L^\infty(\T)$. 
Thus, for some $r\in L^\infty(\T)$,  
\begin{equation*}
\PE f=\langle f, r\rangle, 
\end{equation*}
for all $f\in L^1(\T)$, 
so that $\|\PE\|=\|r\|_{L^\infty(\T)}$. 
Since $L^\infty(\T)\subset L^2(\T)$, $r$ possesses the unique expansion 
\begin{equation*}
r(t)=\sum_{k\in\N}d_k\pk(t)
\end{equation*}
for some $d\in l^2(\Z)$. 
The sequences $d$ and $a$ agree on $K$, and we define $E_Ks:=r$.
\end{proof}
We will also address the case when we have a finite set of basis functions intended for information and a finite set reserved for 
peak reduction. 
We then have a finite set  $\{\pk\}_{k=1}^N$, which of course is then not an orthonormal basis for $\Lt$. 
Consequently, we in general cannot represent the function $r$ in the proof above in terms of $\{\pk\}_{k=1}^N$. 
Nonetheless, we have one direction of Theorem~\ref{equivalence}, which we state as a corollary.  
\begin{corollary}\label{corequiv}
Let $\{\pk\}_{k\in\N}$  be a set of orthonormal functions in $\Lt$, let $K$ be a subset of $\N$, and let $X$ be as previously defined.  
If the PAPR problem is solvable for the pair $K$ and $\{\pk\}_{k\in\N}$  with extension norm $C_{\Ex}$
then  
\begin{equation}
\|f\|_{L^2(\T)}\leq C_{\Ex}\|f\|_{L^1(\T)}\label{repeat}
\end{equation}
for all $f\in X$. 
\end{corollary}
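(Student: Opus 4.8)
The plan is to reproduce direction (i.) of the proof of Theorem~\ref{equivalence}, observing that that argument never invokes completeness of the orthonormal system and therefore survives verbatim when $\{\pk\}_{k\in\N}$ is only a (possibly finite) orthonormal set. First I would fix an arbitrary $f\in X$ with expansion $f=\sum_{k\in K}c_k\pk$ and, invoking solvability, choose a test sequence $a\in l^2(K)$ with $\|a\|_{l^2(K)}\le1$ together with the corresponding extension $E_Ks=\sum_{k\in K}a_k\pk+\sum_{k\in K^c}b_k\pk$, which solvability guarantees to satisfy $\|E_Ks\|_{L^\infty(\T)}\le C_{\Ex}$.

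The key step is the duality identity
\[
\left|\sum_{k\in K}a_k\overline{c}_k\right|=\left|\frac{1}{2\pi}\int_\T f(t)\overline{E_Ks(t)}\,dt\right|,
\]
which holds because, $c$ being supported on $K$, the cross terms $\sum_{k\in K^c}b_k\overline{c}_k$ appearing when one multiplies out the integral vanish; crucially this uses only the orthonormality relations among the $\pk$, not their spanning property. Applying H\"older's inequality together with the extension bound then gives $|\sum_{k\in K}a_k\overline{c}_k|\le\|f\|_{L^1(\T)}\|E_Ks\|_{L^\infty(\T)}\le C_{\Ex}\|f\|_{L^1(\T)}$. Finally I would optimize over $a$ by taking $a_k=c_k/\|c\|_{l^2}$ where $c_k\neq0$ and $a_k=0$ otherwise, so that $\|a\|_{l^2(K)}=1$ and the left side collapses to $\|c\|_{l^2}=\|f\|_{L^2(\T)}$, yielding inequality~(\ref{repeat}).

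The only genuine subtlety — and the reason this is stated as a corollary rather than the full equivalence — is to identify clearly which ingredient of the theorem's proof is unavailable in the finite setting. The reverse implication (ii.) constructs the extension operator by extending a functional via Hahn--Banach and then expanding the resulting $r\in L^\infty(\T)$ in the basis $\{\pk\}$, and that expansion step requires $\{\pk\}$ to be complete in $\Lt$; for a finite system this genuinely breaks down. The forward direction, by contrast, only ever pairs $f$ against an already-constructed function $E_Ks$ whose coefficients it multiplies against coefficients supported on $K$, an operation that needs nothing beyond finite orthonormality. I therefore expect no real obstacle in the forward argument itself; the main point to make explicit is precisely this asymmetry, which is what justifies retaining only one of the two inequalities.
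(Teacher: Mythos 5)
Your proposal is correct and follows essentially the same route as the paper: the paper obtains Corollary~\ref{corequiv} precisely by observing that direction (i.) of the proof of Theorem~\ref{equivalence} uses only the orthonormality relations (the extension $E_Ks$ already has the form $\sum_{k\in K}a_k\pk+\sum_{k\in K^c}b_k\pk$ by Definition~\ref{solvable}, so the pairing $\langle f, E_Ks\rangle$ collapses to $\sum_{k\in K}a_k\overline{c}_k$ without any appeal to completeness), whereas direction (ii.) breaks down because the Hahn--Banach representative $r\in L^\infty(\T)$ cannot in general be expanded in a non-complete system. Your identification of this asymmetry is exactly the remark the paper makes in the paragraph preceding the corollary.
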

In the finite setting, if one can show that a constant $\C_{\Ex}$ does not exist such that 
the norm property in line (\ref{repeat}) holds for all finite cardinalities, then it follows that solvability cannot hold. 
Thus, to prove that solvability does not hold, we do not use the two-way statement of Theorem~\ref{equivalence}, but just this one-way statement.  
This will be our approach later.

\end{subsection}
\end{section}

\begin{section}{The Discrete Fourier Case}\label{Discrete}
\begin{subsection}{Density of Information Sets for OFDM }
The discrete Fourier case is interesting for several reasons. 
First, the discrete case implies the continuous case, and therefore, delivers the result on the density of tone reservation sets for OFDM. 
Additionally, the problem considered is applicable to a large number of areas and is valuable in its own right. 
The discrete case is important for the PAPR problem because much of the work done with signals is, of course, done 
with discretized versions of the signals. 
For example, oversampling and zero-padding are used in the papers~\cite{Tel01,IS09}. 
In some settings it is possible, using sampling results, to relate discrete properties to analog properties, 
and therefore it is valuable to understand the behavior in the discrete setting. 
In \cite{WB03}, for example, bounds on the PAPR of an OFDM signal are given in terms of samples of the signal and the over-sampling rate. 
\begin{definition}
The $N\times N$ inverse discrete Fourier transform (DFT) matrix is given by 
\begin{equation*}
F_{jk}=\frac{1}{\sqrt{N}}e^{-2\pi i (j-1)(k-1)/N}.
\end{equation*}
This matrix is denoted $F$, and for $x\in l^2_N$, $Fx$ denotes this matrix applied to $x$. 
  $F^*$ denotes the Hermitian transpose or adjoint of $F$. 
\end{definition}

\begin{definition}
$l^p_N$ denotes $\C^N$ viewed as a linear space with norm $\|x\|_{l^p_N}=(\sum_{i=1}^N |x_i|^p)^{1/p}$. 
The unit ball in $l^p_N$ is denoted $B^p_N$, i.e. 
\begin{equation*}
B^p_N=\{ x\in l^p_N:\; \|x\|_{l^p_N}\leq 1\}.
\end{equation*}
\end{definition}

Let $\{N_k\}_{k=1}^\infty$ be a subsequence of $\N$, and for each $N_k$ let $I_{N_k}$ be a subset of $\{1,...,N_k\}$. 
$I_{N_k}^c $ denotes $\{1,...,N_k\}\backslash I_{N_k}$. 
In analogy to Definition~\ref{solvable}, we say the discrete PAPR problem is \emph{solvable}  for the sequences $\{N_k\}_{k=1}^\infty$ and  $\{I_{N_k}\}_{k=1}^\infty$ 
if there exists a constant $C_{\Ex}$, such that for each $k$, for all $x\in l^2_{N_k}$ with $\supp(x)\subset I_{N_k}$ there exists a 
compensation vector $r\in l^2_{N_k}$ supported in $I_{N_k}^c$ such that 
\begin{equation*}
\|F(x+r)\|_{l^\infty_{N_k}} \leq \frac{C_{\Ex}}{\sqrt{N_k}}\|x\|_{l^2_{N_k}}. 
\end{equation*}

\begin{theorem}\label{equivdiscrete}
Let $\{N_k\}_{k=1}^\infty$ be a subsequence of $\N$, 
and let $I_{N_k}$ be a subset of $\{1,...,N_k\}$. 
Let $Y_k=\{y\in l^2_{N_k}:\; \supp(F^*y)\subset I_{N_k}\}$. 
The discrete PAPR problem is solvable for the sequence of sets $\{I_{N_k}\}_{k=1}^\infty$ 
with constant $C_{\Ex}$ if and only if 
\begin{equation*}
\| y\|_{l^2_{N_k}}\leq \frac{C_{\Ex}}{\sqrt{N_k}} \|y\|_{l^1_{N_k}}
\end{equation*}
for all $y\in Y_k$. 
\end{theorem}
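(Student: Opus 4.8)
The plan is to mirror the proof of Theorem~\ref{equivalence} in the finite-dimensional setting, using two structural features of the DFT matrix: $F$ is unitary, so $\|Fc\|_{l^2_{N_k}}=\|c\|_{l^2_{N_k}}$ and $\langle Fu,Fv\rangle=\langle u,v\rangle$; and a vector supported on $I_{N_k}$ is orthogonal to any vector supported on $I_{N_k}^c$. A useful preliminary observation is that $Y_k=\{Fc:\supp(c)\subset I_{N_k}\}$, since $\supp(F^*y)\subset I_{N_k}$ is equivalent to $y=Fc$ with $c=F^*y$ supported on $I_{N_k}$. The decisive difference from the truncated continuous situation of Corollary~\ref{corequiv} is that the columns of $F$ form a \emph{complete} orthonormal basis of $\C^{N_k}$, so a representing $l^\infty$ vector admits an exact expansion in the full basis; this is exactly what will let me recover \emph{both} implications rather than only the one-way statement.

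For the forward direction (solvability $\Rightarrow$ norm inequality), I fix $y\in Y_k$, write $y=Fc$ with $\supp(c)\subset I_{N_k}$, and note $\|y\|_{l^2_{N_k}}=\|c\|_{l^2_{N_k}}$. Applying solvability to the input $x=c$ produces a compensation vector $r$ supported on $I_{N_k}^c$ with $\|F(x+r)\|_{l^\infty_{N_k}}\le \frac{C_{\Ex}}{\sqrt{N_k}}\|c\|_{l^2_{N_k}}$. I then pair $y$ against $F(x+r)$: unitarity together with the disjoint-support orthogonality $\langle c,r\rangle=0$ gives
\[
\langle y,F(x+r)\rangle=\langle c,x+r\rangle=\langle c,c\rangle=\|y\|_{l^2_{N_k}}^2,
\]
while H\"older's inequality yields $|\langle y,F(x+r)\rangle|\le \|y\|_{l^1_{N_k}}\,\|F(x+r)\|_{l^\infty_{N_k}}$. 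Combining the two and dividing by $\|y\|_{l^2_{N_k}}=\|c\|_{l^2_{N_k}}$ gives $\|y\|_{l^2_{N_k}}\le \frac{C_{\Ex}}{\sqrt{N_k}}\|y\|_{l^1_{N_k}}$, as required.

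For the converse, given $x$ supported on $I_{N_k}$ I define the functional $\Psi_x(y)=\langle F^*y,x\rangle$ on $Y_k$, regarded as a subspace of $l^1_{N_k}$. Since $F^*y$ is supported on $I_{N_k}$ and $\|F^*y\|_{l^2_{N_k}}=\|y\|_{l^2_{N_k}}$, Cauchy--Schwarz and the hypothesis give $|\Psi_x(y)|\le \|y\|_{l^2_{N_k}}\|x\|_{l^2_{N_k}}\le \frac{C_{\Ex}}{\sqrt{N_k}}\|x\|_{l^2_{N_k}}\,\|y\|_{l^1_{N_k}}$, so $\Psi_x$ is bounded on $Y_k$ with $l^1$-norm at most $\frac{C_{\Ex}}{\sqrt{N_k}}\|x\|_{l^2_{N_k}}$. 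By the Hahn--Banach theorem I extend $\Psi_x$ to all of $l^1_{N_k}$ without increasing its norm; because $(l^1_{N_k})^*=l^\infty_{N_k}$, the extension is represented by some $w\in l^\infty_{N_k}$ with $\|w\|_{l^\infty_{N_k}}\le \frac{C_{\Ex}}{\sqrt{N_k}}\|x\|_{l^2_{N_k}}$.

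It remains to extract the compensation vector, and this is the step I expect to be the real content. Setting $d=F^*w$, I test against an arbitrary $c$ supported on $I_{N_k}$ by taking $y=Fc\in Y_k$: the adjoint identity gives $\langle c,d\rangle=\langle Fc,w\rangle=\Psi_x(Fc)=\langle c,x\rangle$, and since $c$ ranges over all vectors supported on $I_{N_k}$ this forces $d$ to agree with $x$ on $I_{N_k}$. Consequently $r:=d-x$ is supported on $I_{N_k}^c$, and $F(x+r)=Fd=w$ satisfies $\|F(x+r)\|_{l^\infty_{N_k}}\le \frac{C_{\Ex}}{\sqrt{N_k}}\|x\|_{l^2_{N_k}}$, which is precisely solvability. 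The only delicate point is this agreement of $d$ with $x$ on $I_{N_k}$ -- the discrete analogue of the phrase ``the sequences $d$ and $a$ agree on $K$'' in the proof of Theorem~\ref{equivalence} -- since it is exactly what confines the correction $r$ to $I_{N_k}^c$; completeness of the DFT basis is what makes $d=F^*w$ a bona fide expansion and thereby closes the converse.
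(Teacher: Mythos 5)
Your proof is correct and follows essentially the same route as the paper's: the forward direction via unitarity of $F$, H\"older's inequality, and the disjoint-support orthogonality $\langle c,r\rangle=0$; the converse via Hahn--Banach applied to the functional on $Y_k\subset l^1_{N_k}$, representation of the extension by an $l^\infty_{N_k}$ vector, and the agreement of $F^*w$ with $x$ on $I_{N_k}$. Your closing observation about completeness of the DFT basis being what upgrades the one-way Corollary to a two-way equivalence also matches the paper's own discussion.
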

In general one has $\|y\|_{l^2_{N_k}}\leq \|y\|_{l^1_{N_k}}$ for any vector $y$. 
Here, though, as $k$ increases, we eventually have $C_{\Ex}/\sqrt{N_k}<1$, 
and so the important point is that $C_{Ex}$ remains fixed. 

\begin{proof}

i.) Assume that PAPR is solvable with constant $C_{\Ex}$. Let $N$ be an element of $\{N_k\}_{k=1}^\infty$. 
For $I_N\subset \{1,...,N\}$, let $I_N^c=\{1,...,N\}\backslash I_N$. 
Then for any $x\in \C^N$ with $\supp(x)\subset I_N$, we can find an 
extension $r\in \C^N$ with $\supp(r)\subset I_N^c$, such that 
\begin{equation*}
F(x+r)\in \frac{C_{\Ex}}{\sqrt{N}} B^{\infty}_N. 
\end{equation*}
We denote by $l^p_N(I_N)$ elements of $l^p_N$ with support contained in $I_N$. 
Denote by $E_{I_N}$ the operator that maps $x$ to the compensated vector $x+r$. 
Then 
\begin{equation}
\|FE_{I_N}x\|_{l^\infty_N}\leq \frac{C_{\Ex}}{\sqrt{N}} \|x\|_{l^2_N}, \label{discretevector}
\end{equation}
and so $\|F E_{I_N}\|_{l^2_N(I_N)\rightarrow l^\infty_N}\leq \frac{C_{\Ex}}{\sqrt{N}}$. 
As in the analog case, we take a vector $b$ with $\supp(b)\subset I_N$, 
and observe 
\begin{eqnarray*}
|\langle b,E_{I_N}x\rangle |&=& |\langle F b,FE_{I_N}x\rangle |\\
&\leq & \|Fb\|_{l^1_N} \|FE_{I_N}x\|_{l^\infty_N}\\
&\leq & \|Fb\|_{l^1_N} \frac{C_{\Ex}}{\sqrt{N}} \|x\|_{l^2_N}.
\end{eqnarray*}
By setting 
\begin{equation*}
x_k=\bigg\{
\begin{array}{cl}
\frac{b_k}{\|b\|}_{l^2_N}&b_k\neq 0\\
0&b_k=0
\end{array},
\end{equation*}
we obtain
\begin{eqnarray*}
\|b\|_{l^2_N}&=& |\langle b,b\rangle|^2\\
&=& |\langle b,E_{I_N}x\rangle|\\
&\leq & \frac{C_{\Ex}}{\sqrt{N}} \|Fb\|_{l^1_N}. 
\end{eqnarray*}

ii.) 
Let $N$ be an element of $\{N_k\}_{k=1}^\infty$. 
We take an element $c\in l^2_N$ with  $\supp(c)\subset I_N$. 
Let $\Psi_c$ be the functional acting on $Y$ by 
\begin{equation}
\Psi_c y= \langle c,F^* y\rangle.\label{123}
\end{equation}
We then have $|\Psi_cy|\leq \|c\|_{l^2_N} \|y\|_{l^2_N}\leq \frac{C_{\Ex}}{\sqrt{N}} \|y\|_{l^1_N} \|c\|_{l^2_N}$, 
so that 
\begin{equation}
\|\Psi_c\|\leq \frac{C_{\Ex}}{\sqrt{N}}\|c\|_{l^2_N}.\label{1234}
\end{equation} 
Since $Y$ is a closed subspace of $l^1_N$, by the Hahn-Banach Theorem there exists 
an extension $\Psi_E$ of $\Psi_c$ to all of $l^1_N$ such that $\|\Psi_c\|=\|\Psi_E\|$. 
$\Psi_E$ can be represented by a vector $r$ so that 
\begin{equation*}
\Psi_E y= \langle r,y\rangle
\end{equation*}
for all $y\in l^1_N$. 
Let $\overline{c}=Fr$. 
If $y\in Y$ and $y=Fx$, 
then $\langle r,y\rangle= \langle F^* \overline{c},F^*x\rangle = \langle \overline{c},x\rangle$. 
Comparing this with equation~(\ref{123}), 
we see that $c$ and $\overline{c}$  must agree on $I_N$. 
That is, $\overline{c}$ is an extension of $c$. 
Lastly,  using equation~(\ref{1234}),
\begin{eqnarray*}
\| \Psi_E\|&=& \|r\|_\infty\\
&=& \|F\overline{c} \|_\infty\\
&\leq & \frac{C_{\Ex}}{\sqrt{N}}\|c\|_{l^2_N}. 
\end{eqnarray*}

\end{proof}

For a set $A$, $|A|$ denotes its cardinality.

\begin{theorem}\label{discretenotsolvable}
Let $\{N_k\}_{k=1}^\infty$ be a subsequence of $\N$ and let $I_{N_k}$ be the corresponding sets as defined earlier. 
If 
\begin{equation*}
\limsup_{n\rightarrow \infty} \frac{|I_{N_k}|}{N_k}>0,
\end{equation*}
then the discrete PAPR problem is not solvable. 
\end{theorem}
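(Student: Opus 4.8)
The plan is to contradict the norm characterization of Theorem~\ref{equivdiscrete}. If the problem were solvable with a fixed constant $C_{\Ex}$, then $\sqrt{N_k}\,\|y\|_{l^2_{N_k}}\le C_{\Ex}\,\|y\|_{l^1_{N_k}}$ for every $y\in Y_k$ and every $k$. So it suffices to exhibit, along a subsequence on which $|I_{N_k}|/N_k\ge\delta>0$ (available because the $\limsup$ is positive), vectors $y_k\in Y_k$ whose ratio $\rho_k:=\sqrt{N_k}\,\|y_k\|_{l^2_{N_k}}/\|y_k\|_{l^1_{N_k}}$ tends to infinity. Recall that $Y_k=\{Fx:\ \supp(x)\subset I_{N_k}\}$ consists of the time signals whose spectrum lies in $I_{N_k}$, and that a \emph{concentrated} such signal is exactly what forces $\|y_k\|_{l^1_{N_k}}$ to be small relative to $\sqrt{N_k}\,\|y_k\|_{l^2_{N_k}}$. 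The whole task is therefore to manufacture increasingly concentrated band-limited signals out of nothing but the positive density of $I_{N_k}$.

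The construction I would use is a Dirichlet kernel built on an arithmetic progression sitting inside $I_{N_k}$. By the finitary form of Szemer\'edi's theorem, for each fixed length $L$ every subset of $\{1,\dots,N\}$ of density at least $\delta$ contains an $L$-term arithmetic progression once $N$ is large; hence we may choose a progression $A_k=\{a_k+j d_k:\ j=0,\dots,L_k-1\}\subset I_{N_k}$ with $L_k\to\infty$. Taking $x=\mathbf 1_{A_k}$ (legitimate since $A_k\subset I_{N_k}$) and $y_k=Fx$, one has $\|y_k\|_{l^2_{N_k}}=\sqrt{L_k}$, while the entries of $y_k$ form a modulated Dirichlet kernel, $|(y_k)_n|=\tfrac{1}{\sqrt{N_k}}\,|\sin(\pi L_k(n-1)d_k/N_k)|/|\sin(\pi(n-1)d_k/N_k)|$. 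Summing this over $n$ is a discrete Lebesgue-constant estimate: after accounting for $\gcd(d_k,N_k)$ it reduces to a Dirichlet sum on $\Z_{N_k/\gcd(d_k,N_k)}$, giving $\|y_k\|_{l^1_{N_k}}\sim \tfrac{4}{\pi^2}\sqrt{N_k}\,\log L_k$. Consequently $\rho_k\sim \tfrac{\pi^2}{4}\,\sqrt{L_k}/\log L_k\to\infty$, which contradicts the uniform bound $C_{\Ex}$ and, by Theorem~\ref{equivdiscrete}, rules out solvability.

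The main obstacle is the step that guarantees the progression: producing a concentrated element of $Y_k$ for an \emph{arbitrary} positive-density set, including pseudorandom sets that contain no long frequency interval. This is where genuine content enters, since a Haar-random subspace of proportional dimension has $l^1$ and $l^2$ norms equivalent with a constant independent of $N_k$ and so admits no concentration of this kind; the partial-Fourier subspaces escape this rigidity only because a frequency support of positive density is forced, by Szemer\'edi's theorem, to contain arbitrarily long arithmetic progressions, and these manifest in the time domain as concentrated Dirichlet kernels. Invoking Szemer\'edi's theorem is a heavy tool, and I would first try to replace it by a lighter structural input --- for instance a Bohr-set or generalized-progression substructure obtained from $I_{N_k}-I_{N_k}$ by elementary Fourier and pigeonhole arguments, which would yield the same kind of concentrated kernel --- but the arithmetic-structure extraction, together with the Lebesgue-constant bookkeeping for the resulting kernel, is the technical heart of the theorem; the reduction via Theorem~\ref{equivdiscrete} is routine by comparison.
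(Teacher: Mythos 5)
Your proposal is correct and is essentially the paper's own proof: reduce via Theorem~\ref{equivdiscrete}, use Szemer\'edi's theorem (Theorem~\ref{SzTao}) to extract arbitrarily long arithmetic progressions from the positive-density sets, and show that the DFT of a vector supported on a length-$L$ progression has $l^1$ norm $O(\sqrt{N}\log L)$, forcing $C_{\Ex}\geq \sqrt{L}/\log L\rightarrow\infty$. The only divergence is cosmetic: the paper modulates the progression entries by a continuous phase parameter $t$ and bounds the minimum over $t$ by the average, obtaining the Lebesgue-constant bound $N\log m$ from the integral and thereby avoiding your discrete $\gcd$ bookkeeping (where, incidentally, your claimed asymptotic $\sim\tfrac{4}{\pi^2}\sqrt{N_k}\log L_k$ holds only as an upper bound --- e.g.\ it fails, in the favorable direction, when $d_k\mid N_k$ and the progression is a full residue class --- but an upper bound is all the argument needs).
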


The proof will use arithmetic progressions and Szemer\'edi's Theorem, Theorem~\ref{SzTao}. 
\begin{definition}
An \emph{arithmetic progression of length} $m$ is a subset of $\Z$ that has the form 
$\{a, a+d,a+2d,....,a+(m-1)d\}$ for some integer $a$ and some positive integer $d$. 
\end{definition}

\begin{theorem}(Theorem 1.2 in \cite{Tao06})\label{SzTao}
For any integer $m\geq 1$ and any $0<\delta\leq 1$, there exists an integer $N_{Sz}(m,\delta)\geq 1$ such that for 
every $N\geq N_{Sz}(m,\delta)$, 
every set $A\subset \{1,...,N\}$ of cardinality $|A|\geq \delta N$ contains at least one arithmetic progression of length $m$. 
\end{theorem}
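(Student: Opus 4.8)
This is the finitary form of Szemer\'edi's theorem, and the approach I would take is the \emph{density increment} method, which reduces matters to a structure-versus-randomness dichotomy for the Gowers uniformity norms. Fix the length $m$ and density $\delta$, and let $A\subset\{1,\dots,N\}$ satisfy $|A|\geq\delta N$. It is convenient first to transfer the problem to the cyclic group $\Z/N'\Z$ for a prime $N'$ with $2mN<N'<4mN$, so that $m$-term progressions in $\Z/N'\Z$ meeting $\{1,\dots,N\}$ cannot ``wrap around'' and hence correspond to genuine progressions in $\{1,\dots,N\}$. Writing $\alpha=|A|/N'$ and introducing the balanced function $f=1_A-\alpha$, the task becomes showing that $A$ supports an $m$-term progression with common difference $d\geq1$.

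The first substantive step is to introduce the normalized counting form
\[
\Lambda_m(g_0,\dots,g_{m-1})=\mathbb{E}_{x,d\in\Z/N'\Z}\ \prod_{j=0}^{m-1}g_j(x+jd),
\]
so that $\Lambda_m(1_A,\dots,1_A)$ measures the density of $m$-term progressions in $A$. Expanding each factor $1_A=\alpha+f$ multilinearly isolates a main term $\alpha^m$ together with error terms, each carrying at least one factor of $f$. The key inequality here is the \emph{generalized von Neumann inequality}, which bounds every such error term by the Gowers norm $\|f\|_{U^{m-1}}$. Consequently, if $\|f\|_{U^{m-1}}$ is small, then $\Lambda_m(1_A,\dots,1_A)\geq\alpha^m-o(1)$; since the trivial progressions ($d=0$) contribute only $O(\alpha/N')$, for $N$ large this forces a nontrivial progression of length $m$ and we are done.

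The second step treats the complementary case in which $\|f\|_{U^{m-1}}$ is not small. Here one invokes the \emph{inverse theorem} for the $U^{m-1}$ norm: largeness of this norm forces $f$ to correlate nontrivially with a structured object --- a linear phase when $m=3$, and a polynomial phase or degree-$(m-2)$ nilsequence for general $m$. Such a correlation can be localized to produce an arithmetic progression $P\subset\{1,\dots,N\}$, of length tending to infinity, on which the relative density of $A$ exceeds $\alpha+c$ for some $c=c(\delta,m)>0$. Rescaling $P$ to an interval $\{1,\dots,N''\}$ returns the original problem with strictly larger density, and since the density cannot exceed $1$ the iteration must halt after at most $O(1/c)$ stages. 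Bookkeeping the length lost at each stage yields an explicit admissible value of $N_{Sz}(m,\delta)$, completing the argument.

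The main obstacle --- and the only genuinely deep ingredient --- is the inverse theorem for $\|\cdot\|_{U^{m-1}}$ when $m\geq4$: proving that a large uniformity norm forces correlation with a nilsequence of degree $m-2$ is exactly the content of higher-order Fourier analysis, and it is precisely here that the argument departs from the elementary $m=3$ (Roth) case, where the inverse step is a one-line pigeonholing of the Fourier coefficients. By comparison, the von Neumann inequality, the passage to a subprogression, and the termination of the density increment are routine once the dichotomy is in hand.
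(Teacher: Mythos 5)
The paper does not prove this statement at all: it is imported verbatim as Theorem~1.2 of \cite{Tao06} and used as a black box in the proof of Theorem~\ref{discretenotsolvable}, so there is no internal proof to compare against --- only the cited source's. Your sketch follows the Fourier-analytic \emph{density increment} line (Gowers, as completed by the Green--Tao--Ziegler inverse theorem): transfer to $\Z/N'\Z$, expand $1_A=\alpha+f$, generalized von Neumann, inverse theorem for $\|\cdot\|_{U^{m-1}}$, increment on a subprogression, iterate. The proof in \cite{Tao06} is genuinely different: it is a finitary (``quantitative ergodic theory'') argument in the spirit of Furstenberg, resting on an \emph{energy increment} and a structure theorem that decomposes $1_A$ relative to a tower of factors into almost periodic and mixing components, and it deliberately avoids any inverse theorem for the Gowers norms. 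Your outline is a correct blueprint of the modern density-increment proof, with two caveats worth flagging. First, the step you rightly call the only deep ingredient is carrying even more weight than you say: the full inverse theorem for $U^{m-1}$ (correlation with a degree-$(m-2)$ nilsequence) historically postdates both Gowers' and Tao's proofs; Gowers' own argument used only a \emph{local} inverse theorem (correlation with polynomial phases on many short subprogressions), which already suffices for the increment, so your appeal to the global inverse theorem is a legitimate but anachronistically strong black box. Second, passing from a nilsequence correlation to a density increment on a long progression is not ``routine'': it requires the quantitative equidistribution and factorization theory of nilsequences to partition $\{1,\dots,N\}$ into progressions on which the nilsequence is essentially constant before pigeonholing, and the resulting bounds on $N_{Sz}(m,\delta)$ come out of tower type, so the final bookkeeping is substantial. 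As a proof modulo these standard-but-deep inputs your argument is sound; as a self-contained proof it is, unavoidably, a sketch --- which is consistent with the paper's own choice to quote the theorem rather than prove it. What the ergodic route buys is a self-contained proof avoiding higher-order Fourier analysis; what your route buys is a conceptually transparent dichotomy and, in principle, better-structured quantitative information.
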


\begin{proof}{\bf of Theorem~\ref{discretenotsolvable}} 
By Theorem~\ref{SzTao}, there exists an integer $N$ in the subsequence $\{N_k\}_{k=1}^\infty$ such that $I_{N}$ contains 
an arithmetic progression of length $m$. 
Assume again that this progression is $\{a+dl\}_{l=0}^{m-1}$. 
Let $D$ denote the vector of length $N$ with the value $e^{\frac{2\pi i(a+dl)t}{N}}/\sqrt{m}$ 
at the entries of the arithmetic progression, where $t$ will be 
addressed shortly.  
Then 
\begin{eqnarray}
\|FD\|_{l^1_{N}}&=& \sum_{j=1}^N|(FD)_j|\nonumber\\
&=& \sum_{j=1}^N\left| \sum_{l=1}^NF_{jl}D_l\right|\nonumber\\
&=&  \sum_{j=1}^N\left|\frac{1}{\sqrt{N}}\frac{1}{\sqrt{m}}\sum_{l=0}^{m-1}e^{-\frac{2\pi i(a+dl)t}{N} }e^{\frac{2\pi i dlj}{N}}\right|\nonumber\\
&=&\frac{1}{\sqrt{m}}\frac{1}{\sqrt{N}} \sum_{j=1}^N\left|\sum_{l=0}^{m-1}e^{\frac{2\pi i dl(t-j)}{N}}\right|.\label{apr}
\end{eqnarray}
This calculation holds for any $t$, so we may take the $t$ that minimizes the 
absolute value: 
\begin{eqnarray*}
\min_{t\in[0,1]} \sum_{j=1}^N\left|\sum_{l=0}^{m-1}e^{\frac{2\pi i dl(t-j)}{N}}\right|
&=& \min_{t\in[0,1]} \sum_{j=1}^N \left|\frac{\sin\frac{\pi dm(t-j)}{N}}{\sin\frac{\pi d(t-j)}{N}}\right|\\
&\leq &  \int_{0}^1 \sum_{j=1}^N \left|\frac{\sin\frac{\pi dm(t-j)}{N}}{\sin\frac{\pi d(t-j)}{N}}\right|dt \\
&= & N \int_{0}^1 \left|\frac{\sin\frac{\pi dmt}{N}}{\sin\frac{\pi dt}{N}}\right|dt\\
&\leq & N\log m,
\end{eqnarray*}
where the last step is the bound on the Dirichlet kernel. 
Now, returning to line~(\ref{apr}), 
and defining $D$ using the $t$ that results in the minimum in the calculation above, 
we have $\|FD\|_{l^1_{N}} \leq \frac{\log m}{\sqrt{m}}\sqrt{N}$. 
If the discrete PAPR problem is solvable, then by Theorem~\ref{equivdiscrete}, we have a norm equivalence with a factor $C_{\Ex}/\sqrt{N}$. 
However, we have just shown that $C_{\Ex}$ must be arbitrarily small. 
This contradiction proves Theorem~\ref{discretenotsolvable}.  
\end{proof}

Our next result shows that one can have a norm equivalence on a subspaces given by subsets of the 
columns of the DFT matrix when the density converges to zero fast enough. 

\begin{corollary}
Let $\{N_k\}_{k=1}^\infty$ be a subsequence of $\N$ and let $I_{N_k}$ be the corresponding sets as defined earlier. 
Assume that the compensation set is finite with indices $\{-N,...,N\}\backslash I_{N_k}$. 
If 
\begin{equation*}
\limsup_{k\rightarrow \infty} \frac{|I_{N_k}|}{N_k}>0,
\end{equation*}
then the PAPR problem is not solvable for the sequence of information sets $\{e^{2\pi i l\cdot}\}_{l\in I_{N_k}}$. 
\end{corollary}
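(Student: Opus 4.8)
The plan is to derive the analog statement from the discrete one in Theorem~\ref{discretenotsolvable}, exploiting that a trigonometric polynomial of bounded degree is both determined by and controlled by a finite set of equally spaced samples; this is the precise sense in which, as noted at the start of the section, the discrete case implies the continuous case. I would argue by contradiction, assuming the PAPR problem is solvable for the information systems $\{e^{2\pi i l\cdot}\}_{l\in I_{N_k}}$ with a fixed constant $C_{\Ex}$.

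Fix $k$, write the occupied band as $\{-N_k,\dots,N_k\}$, and set $M_k=2N_k+1$. Given information coefficients $a$ supported on $I_{N_k}$, solvability (Definition~\ref{solvable}) yields a compensation vector $b$ supported on $\{-N_k,\dots,N_k\}\setminus I_{N_k}$ such that the composite trigonometric polynomial
\[
g(t)=\sum_{l\in I_{N_k}}a_l\,e^{2\pi i l t}+\sum_{l\in\{-N_k,\dots,N_k\}\setminus I_{N_k}}b_l\,e^{2\pi i l t}
\]
has degree at most $N_k$ and obeys $\esssup_{t\in[0,1]}|g(t)|\le C_{\Ex}\|a\|_{l^2}$. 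I would then sample $g$ at the $M_k$ points $t_j=j/M_k$. Since $g$ has degree at most $N_k$ and $M_k=2N_k+1$ exceeds its bandwidth, there is no aliasing: the frequencies $\{-N_k,\dots,N_k\}$ are pairwise distinct modulo $M_k$, so applying the DFT to the sample vector recovers every coefficient $a_l,b_l$ exactly, placing frequency $l$ at index $l\bmod M_k$. This produces a support-preserving correspondence in which $I_{N_k}$ and its complement map to a discrete information set $\tilde I_k\subset\{0,\dots,M_k-1\}$ and its complement, with $|\tilde I_k|=|I_{N_k}|$.

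Under this correspondence the discrete information vector $x$ is just $a$ relabeled, so $\|x\|_{l^2_{M_k}}=\|a\|_{l^2}$, while the discrete composite $x+r$ is the full coefficient vector relabeled, so that $F(x+r)$ is, up to the unitary normalization built into $F$, the vector of samples $(g(t_j))_j$. Because sampling can only decrease the supremum,
\[
\|F(x+r)\|_{l^\infty_{M_k}}=\frac{1}{\sqrt{M_k}}\max_j|g(t_j)|\le\frac{1}{\sqrt{M_k}}\esssup_{t}|g(t)|\le\frac{C_{\Ex}}{\sqrt{M_k}}\|x\|_{l^2_{M_k}},
\]
which is exactly the discrete solvability condition. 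Hence the discrete PAPR problem on the DFT of size $M_k$ is solvable for the sets $\{\tilde I_k\}$ with the same constant $C_{\Ex}$. Since $M_k=2N_k+1$ and $|\tilde I_k|=|I_{N_k}|$, the hypothesis $\limsup_k|I_{N_k}|/N_k>0$ is equivalent to $\limsup_k|\tilde I_k|/M_k>0$, and so Theorem~\ref{discretenotsolvable} forces the discrete problem for $\{\tilde I_k\}$ to be not solvable, a contradiction.

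I expect the only real work to be in the middle step, namely checking that sampling at the Nyquist rate $M_k=2N_k+1$ preserves the support structure (exact recovery with no aliasing) and correctly matches the $l^2$ normalization, since the transfer of the peak bound is immediate from $\max_j|g(t_j)|\le\esssup_t|g(t)|$. No deeper obstacle should arise, because the finiteness of the compensation band guarantees that the composite signal is always a polynomial of degree at most $N_k$, which is precisely the hypothesis the sampling argument needs.
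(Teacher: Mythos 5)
Your proposal is correct and is essentially the paper's own argument: the paper likewise deduces the corollary from Theorem~\ref{discretenotsolvable} by observing that the discrete case gives the values of the continuous signal on a regular grid, so the continuous $L^\infty$-norm dominates the discrete $l^\infty$-norm of the samples and non-solvability transfers from the discrete setting to the continuous one. Your write-up is the careful contrapositive form of that same reduction; in particular, your choice of $M_k=2N_k+1$ sample points (rather than the $N$-point grid the paper's terse proof invokes, even though the occupied band $\{-N_k,\dots,N_k\}$ contains $2N_k+1$ frequencies) is what makes the relabeling of frequencies modulo $M_k$ alias-free, so that the information and compensation supports map exactly onto a discrete set $\tilde I_k$ and its complement as the discrete solvability definition requires.
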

\begin{proof}
The proof of the previous theorem gives that inequality~(\ref{discretevector}) cannot hold for a common constant. 
Since the discrete case gives the values of the continuous case on the regular $N$-point grid, 
it follows that there does not exist a universal constant such that 
\begin{equation}
\left\| \sum_{l=1}^N e^{-2\pi il\cdot}x_l+ \sum_{l=1}^N e^{-2\pi il \cdot}y_l\right\|_{L^\infty([0,1])}\leq C_{\Ex}\|x\|_{l^2(I_N)},
\end{equation}
where $\supp(x)\subset I_N$ and $\supp(y)\subset R_N$. 
\end{proof}

Note that the results of this section also give a bound on the size of subsets for which one does have solvability. 
Namely, from Theorem~\ref{equivdiscrete} and Theorem~\ref{discretenotsolvable}, we see that solvability with constant $C_{\Ex}$ implies that 
if $m$ is the length of the longest arithmetic progress in an information set $I_N$, 
then $\frac{\sqrt{m}}{\log m}\leq C_{\Ex}$. 
Thus, for a given index set $I_N$, one can determine its long arithmetic progression and obtain a lower bound on the 
extension norm. 
\end{subsection}

\begin{subsection}{Examples of Solvability and Projection Properties}
The next theorem gives a condition on a subset of $\N$ such that one has a bounded extension constant. 
The sacrifice, however, is that the information set has a density converging exponentially fast to $0$. 
\begin{theorem}\label{solvelambda}
Let $R_N=\{r_l\}_{l=1}^L$ be subset of $\{1,...,N\}$ satisfying $r_k\geq \lambda r_{k-1}$ for some $\lambda>1$, 
and where $N$ is chosen such that $N=\lambda r_L$.  
There exists a constant $C(\lambda)$ depending only on $\lambda$ such that for all $a\in l^2_N$ supported on $R_N$ 
\begin{equation*}
\|a\|_{l^2_N}\leq \frac{C(\lambda)}{\sqrt{N}}\|Fa\|_{l^1_N}. 
\end{equation*}
\end{theorem}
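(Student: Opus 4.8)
The plan is to exploit the fact that the columns of $F$ indexed by a lacunary set behave like a lacunary trigonometric system, for which the $L^1$ and $L^2$ norms are comparable. Write $g=Fa$. Since $F$ is unitary, $\|g\|_{l^2_N}=\|a\|_{l^2_N}$, so it suffices to prove $\|g\|_{l^2_N}\le C(\lambda)N^{-1/2}\|g\|_{l^1_N}$. The inequality $\|g\|_{l^1_N}\le \sqrt N\,\|g\|_{l^2_N}$ holds for every vector, so all the content is in the reverse direction, and this can only come from the sparsity of $R_N$. Mirroring the classical proof of the lacunary $L^1$--$L^2$ equivalence, I would not estimate $\|g\|_{l^1_N}$ directly but instead bound $\|g\|_{l^4_N}$ from above and then interpolate between $l^1_N$ and $l^4_N$.

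First I would record the arithmetic input. Setting $\omega=e^{2\pi i/N}$ and writing $g_j=\frac{1}{\sqrt N}\sum_{l}a_{r_l}\,\omega^{-(j-1)(r_l-1)}$, the frequency set $\{r_l-1\}$ inherits the gap condition, since $r_l\ge\lambda r_{l-1}$ forces $r_l-1\ge\lambda(r_{l-1}-1)$ for $\lambda>1$. I would then expand $|g_j|^2=\frac1N\sum_d c_d\,\omega^{-(j-1)d}$ with $c_d=\sum_{r_l-r_{l'}=d}a_{r_l}\overline{a_{r_{l'}}}$ and sum the fourth powers over $j$. Because $R_N\subset\{1,\dots,r_L\}$ and $N=\lambda r_L$, every difference obeys $|d|<N/\lambda$, so the discrete orthogonality of the characters (the geometric sum $\sum_{j=1}^N\omega^{-(j-1)e}$ vanishes unless $N\mid e$, in which case it equals $N$) collapses the double sum over $(d,d')$ to its diagonal, up to a bounded number of controlled wrap-around terms when $1<\lambda<2$, giving
\begin{equation*}
\|g\|_{l^4_N}^4=\sum_{j=1}^N|g_j|^4\le \frac{c_\lambda}{N}\sum_d |c_d|^2 .
\end{equation*}

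The heart of the argument, and the step I expect to be the main obstacle, is the combinatorial lemma that for a Hadamard-lacunary sequence each integer $d$ is represented as a difference $r_l-r_{l'}$ at most $B(\lambda)$ times, with $B(\lambda)$ depending only on $\lambda$; this is precisely the estimate underlying membership of lacunary series in $L^4$. Granting it, Cauchy--Schwarz applied to each $c_d$ yields $\sum_d|c_d|^2\le B(\lambda)\big(\sum_l|a_{r_l}|^2\big)^2=B(\lambda)\|a\|_{l^2_N}^4$, and hence $\|g\|_{l^4_N}\le \big(c_\lambda B(\lambda)\big)^{1/4}N^{-1/4}\|g\|_{l^2_N}$. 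Proving the representation bound with a clean dependence on $\lambda$ -- and, for $1<\lambda<2$, decomposing $R_N$ into boundedly many subsequences of ratio at least $3$ so that the aliasing terms above are genuinely negligible -- is the delicate point; everything else is bookkeeping.

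Finally I would close the loop by interpolation. Log-convexity of the $l^p_N$ norms gives $\|g\|_{l^2_N}\le \|g\|_{l^1_N}^{1/3}\|g\|_{l^4_N}^{2/3}$, and substituting the $l^4$ bound, cancelling the common factor $\|g\|_{l^2_N}^{2/3}$, and cubing leaves $\|g\|_{l^2_N}\le C(\lambda)N^{-1/2}\|g\|_{l^1_N}$ with $C(\lambda)=\sqrt{c_\lambda B(\lambda)}$. Returning to $\|g\|_{l^2_N}=\|a\|_{l^2_N}$ and $\|g\|_{l^1_N}=\|Fa\|_{l^1_N}$ gives the claimed inequality; combined with Theorem~\ref{equivdiscrete}, it shows the discrete PAPR problem is solvable on such exponentially sparse information sets.
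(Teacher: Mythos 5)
Your proposal is correct, but it takes a genuinely different route from the paper. The paper lifts the problem to the continuous torus: it sets $f(t)=\sum_{l=1}^L a_{r_l}e^{ir_lt}$, invokes Paley's theorem (Theorem~\ref{Paley}) as a black box to get $\|f\|_{L^2([0,1])}\leq B(\lambda,1)\|f\|_{L^1([0,1])}$, and then converts $\|f\|_{L^1}$ into the discrete quantity $\frac{1}{N}\sum_{l}|f(2\pi l/N)|=N^{-1/2}\|Fa\|_{l^1_N}$ by means of a reproducing kernel built from two Fej\'er kernels; the hypothesis $N=\lambda r_L$ is used precisely there, to guarantee such a kernel exists with $\|K\|_{L^1}\leq 2\lambda/(\lambda-1)$, yielding $C(\lambda)=\frac{2\lambda}{\lambda-1}B(\lambda,1)$. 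You instead stay entirely discrete and reprove the lacunary $L^1$--$L^2$ equivalence from scratch via the fourth moment: bounded multiplicity of differences (the $\Lambda(4)$ property), discrete character orthogonality, and the interpolation $\|g\|_{l^2_N}\leq\|g\|_{l^1_N}^{1/3}\|g\|_{l^4_N}^{2/3}$. In your argument $N=\lambda r_L$ plays a different role: it caps all differences at $|d|<N/\lambda$ and thereby controls aliasing. Your route buys a self-contained, purely finite-dimensional proof with an explicit constant and no appeal to Paley; the paper's route buys brevity and a kernel technique that recurs elsewhere in the paper (compare Theorem~\ref{finprojection}).

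Two details you flag as delicate are in fact easy, and you should record them. First, the multiplicity lemma has a two-line proof with an explicit constant: if $r_l-r_{l'}=d>0$, then $r_{l'}\leq r_l/\lambda$ forces $d\geq r_l(1-1/\lambda)$, so $r_l\in\left(d,\tfrac{\lambda}{\lambda-1}d\right]$, an interval containing at most $B(\lambda)=1+\log_\lambda\tfrac{\lambda}{\lambda-1}$ elements of a $\lambda$-lacunary sequence, and $r_{l'}=r_l-d$ is then determined. State it for $d\neq 0$ and treat $c_0=\|a\|_{l^2_N}^2$ separately (its multiplicity is $L$, but $|c_0|^2=\|a\|_{l^2_N}^4$ needs no lemma). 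Second, the wrap-around terms for $1<\lambda<2$ do not require your decomposition into ratio-$3$ subsequences: for each $d$ there is at most one admissible partner $d'=d\pm N$, so Cauchy--Schwarz over $d$ bounds the aliasing contribution by $\frac{2}{N}\sum_d|c_d|^2$, the same quantity as the diagonal, and one may simply take $c_\lambda=3$.
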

The theorem should be compared with Theorem~\ref{diadsubsetwalsh}, its counterpart for the Walsh setting. 
In the theorem just stated, $\lambda$ provides for both  a proportion of 
functions with frequencies higher than $r_L$, as well as the solvability in the 
first place. 
This latter aspect is due to the following result of Paley. 
\begin{theorem}\emph{(Section I.B.8 of \cite{Woj91})} \label{Paley}
Let $0<p<\infty$ and let $\{n_k\}_{k=1}^\infty$ be a subsequence of  $\N$ such that $\inf_{k\in \N} \frac{n_{k+1}}{n_k}=\lambda>1$. 
Then there exist constants $A(\lambda,p)$ and $B(\lambda,p)$ such that for all sequences with 
only finitely many non-zero terms 
\begin{eqnarray*}
A(\lambda,p) \|\sum_{i=k}^\infty a_ke^{ik\cdot}\|_{L^p([0,1])} &\leq&\| \sum_{k=1}^\infty a_ke^{ik\cdot} \|_{L^2([0,1])} \leq B(\lambda,p) \|\sum_{k=1}^\infty a_ke^{ik\cdot}\|_{L^p([0,1])}.\nonumber
\end{eqnarray*}
\end{theorem}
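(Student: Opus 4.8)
The plan is to derive the full family of $L^p$ equivalences from a single quantitative moment bound, namely an estimate $\|f\|_{L^p}\le B(\lambda,p)\,\|f\|_{L^2}$ for large even integers $p$. Throughout write $f(t)=\sum_k a_k e^{2\pi i n_k t}$ for a finitely supported coefficient sequence, and recall that the exponentials are orthonormal on $[0,1]$, so $\|f\|_{L^2}=(\sum_k|a_k|^2)^{1/2}=:\|a\|_{\ell^2}$. Since $[0,1]$ carries total mass one, Jensen's inequality gives the monotonicity $\|f\|_{L^p}\le\|f\|_{L^q}$ whenever $p\le q$; in particular $\|f\|_{L^p}\le\|f\|_{L^2}$ for $p\le 2$ and $\|f\|_{L^2}\le\|f\|_{L^p}$ for $p\ge 2$. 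These dispose of one inequality in each regime for free, so only two assertions remain: (I) $\|f\|_{L^p}\le B(\lambda,p)\|f\|_{L^2}$ for $p>2$, and (II) $\|f\|_{L^2}\le A(\lambda,p)^{-1}\|f\|_{L^p}$ for $0<p<2$.

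First I would reduce (II) to (I). Fix any $q>2$ and, given $0<p<2$, choose $\theta\in(0,1)$ with $\tfrac12=\tfrac{\theta}{p}+\tfrac{1-\theta}{q}$. Log-convexity of the $L^r$ norms (three-line Hölder) gives $\|f\|_{L^2}\le\|f\|_{L^p}^{\theta}\|f\|_{L^q}^{1-\theta}$, and inserting (I) at exponent $q$ turns this into $\|f\|_{L^2}\le\|f\|_{L^p}^{\theta}\,(B_q\|f\|_{L^2})^{1-\theta}$; since $\|f\|_{L^2}$ is finite, dividing by $\|f\|_{L^2}^{1-\theta}$ yields (II). Thus the whole theorem rests on (I).

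For (I) it suffices to treat $p=2m$ an even integer, since monotonicity gives $\|f\|_{L^p}\le\|f\|_{L^{2m}}$ for $m=\lceil p/2\rceil$. I would first reduce to large lacunarity: pick $s$ with $\lambda^s>m$ and split $\{n_k\}$ into the $s$ residue subsequences $\{n_{sj+r}\}_j$, each of which is lacunary with ratio at least $\lambda^s>m$. If (I) holds for each piece $f_r$ with a uniform constant, then the triangle inequality together with Cauchy–Schwarz and the orthogonality identity $\|f\|_{L^2}^2=\sum_r\|f_r\|_{L^2}^2$ recombines them at the cost of a factor $\sqrt s$, so I may assume $\lambda>m$. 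Now expand $\|f\|_{L^{2m}}^{2m}=\int_0^1 f^m\overline{f}^{\,m}\,dt$ into a sum over index tuples $(\vec k,\vec l)$; orthogonality retains only the resonant terms, those with $n_{k_1}+\cdots+n_{k_m}=n_{l_1}+\cdots+n_{l_m}$. The decisive point is that the gap condition $\lambda>m$ forbids any resonance other than the diagonal ones, where $\vec k$ and $\vec l$ are permutations of the same multiset: isolating the largest index $K$ occurring and using $\sum_{j<K}n_j\le m\,n_K/\lambda<n_K$ forces its multiplicities on the two sides to agree, and one peels off indices downward by induction. On the surviving diagonal the coefficient sum is bounded by $m!\,(\sum_k|a_k|^2)^m=m!\,\|f\|_{L^2}^{2m}$, giving $\|f\|_{L^{2m}}\le (m!)^{1/2m}\|f\|_{L^2}$ and hence (I).

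The main obstacle is the combinatorial and Diophantine heart of the last paragraph: showing that sufficiently strong lacunarity eliminates all off-diagonal resonances, and then counting the diagonal contributions with the correct power of $\|a\|_{\ell^2}$. This is the only place where the hypothesis $\lambda>1$ is genuinely used, and the bookkeeping — handling repeated indices in the peeling argument and tracking the permutation count $m!$ — is where care is required; by contrast the splitting reduction and the passage from even integers to general $p$ and then to (II) are all soft.
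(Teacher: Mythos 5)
Your proof is correct, and in fact there is nothing in the paper to compare it against: the paper quotes this theorem verbatim from Section I.B.8 of Wojtaszczyk (typos included --- the exponentials should read $e^{in_k\cdot}$ and the first sum $\sum_{k=1}^\infty a_k e^{in_k\cdot}$) and supplies no proof, using it only as an ingredient in Theorem~3.5. So the relevant comparison is with the standard literature. What you give is the classical Zygmund--Paley even-moment argument: reduce to $p=2m$, boost the lacunarity by splitting into $s$ residue subsequences with ratio $\lambda^s>m$ and recombine via triangle inequality, Cauchy--Schwarz and frequency-disjointness (cost $\sqrt{s}$), kill all off-diagonal resonances by the peeling argument at the maximal frequency, bound the diagonal by the multinomial identity (cost $(m!)^{1/2m}$), and then recover the small-$p$ lower bound from log-convexity of $r\mapsto\|f\|_{L^r}$, which is valid for all $r>0$, including $p<1$. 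This is a genuinely self-contained and quantitative route; textbook treatments such as Wojtaszczyk's typically obtain the hard small-$p$ direction instead from Riesz-product (Sidon-set) constructions, which handle any $\lambda>1$ directly without the splitting step but are less elementary and give less explicit constants. Two small slips to repair in writing it up: after the splitting your peeling inequality should use the boosted ratio, i.e.\ if the maximal index $K$ has unequal multiplicities on the two sides then $n_K\le |\alpha-\beta|\,n_K\le m\,n_K/\lambda^s<n_K$, not $m\,n_K/\lambda$; and the diagonal count rests on $\sum \binom{m}{m_1,\dots,m_r}\prod_j |a_j|^{2m_j}=\|a\|_{\ell^2}^{2m}$ with one multinomial factor bounded by $m!$, which is exactly the bookkeeping you flagged and does go through.
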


\begin{proof}{\bf of Theorem~\ref{solvelambda}}
Assume that $a\in l^2_N$ is supported on $R_N$, and define 
\begin{equation*}
f(t)=\sum_{l=1}^L a_{r_l}e^{ir_lt}.
\end{equation*}
We define a set of kernels that allow us to represent $f$ in terms of its samples: 
\begin{eqnarray*}
\mathcal{K}_{N,\lambda}\hspace{-.2cm}&=&\hspace{-.2cm}\bigg\{K(t)=\sum_{k=-r_L}^{r_L}e^{ikt}+ \sum_{k=-N}^{-r_L-1}d_ke^{ikt}+\sum_{k=r_L+1}^{N}d_ke^{ikt},\nonumber\\
&&\hspace{1.5cm}\textnormal{ where }d_k=d_{-k} \textnormal{ for }k=r_L+1,...,N\bigg\}.
\end{eqnarray*}

For any kernel $K\in \mathcal{K}_{\lambda,N}$, we have 
\begin{equation*}
f(t)=\frac{1}{N}\sum_{l=0}^Nf\left(\frac{2\pi l}{N}\right)K\left(t-\frac{2\pi l}{N}\right).
\end{equation*}
Then, using Theorem~\ref{Paley}, 
\begin{eqnarray}
\|a\|_{l^2_N}&=& \|f\|_{L^2([0,1])}\label{5656}\\
&\leq & B(\lambda,1)\|f\|_{L^1([0,1])}\nonumber\\
&=& B(\lambda,1) \left\|\frac{1}{N}\sum_{l=0}^Nf\left(\frac{2\pi l}{N}\right)K\left(\cdot-\frac{2\pi l}{N}\right)   \right\|_{L^1([0,1])}\nonumber\\
&\leq &  \frac{B(\lambda,1)}{N}\sum_{l=0}^N\left|f\left(\frac{2\pi l}{N}\right)\right| \left\|K\left(\cdot-\frac{2\pi l}{N}\right)  \right\|_{L^1([0,1])}\nonumber\\
&= & \frac{B(\lambda,1)}{N}\sum_{l=0}^N\left|f\left(\frac{2\pi l}{N}\right)\right| \|K \|_{L^1([0,1])}\label{1111}\\
&=& B(\lambda,1)\|K \|_{L^1([0,1])} \frac{1}{N}\sum_{l=0}^N|f\left(\frac{2\pi l}{N}\right)|,\nonumber
\end{eqnarray}
where in line~(\ref{1111}) we use the shift-invariance of the $\|\cdot\|_{L^1(\T)}$-norm for any $K$. 

Setting $d_k=(N-r_L-\frac{k}{2})$ for $k=r_L,..., N$ and $K(t)=\frac{1}{N-r_L} ( \sum_{k=0}^{N}D_k(t) -\sum_{k=0}^{r_L}D_k(t))$, 
we may define $f+g_{\lambda N}$ to be the convolution of $f+g$ with $K$. 
The Fourier expansion of $f+g_{ N}$ is supported on $\{-N,...,N\}$ 
and agrees with $a$ on $\{r_k\}_{k=1}^\infty$. 
Using $P_K$ to denote the projection given by convolution with $K$, 
\begin{equation}
\|f+g_{N}\|_{L^\infty(\T)}\leq \|P_K\| \|f+g\|_{L^\infty(\T)}\leq C_{Ex}\|a\|_{l^2(\Z)}\|P_K\| .\label{56}
\end{equation}
The norm $ \|P_K\| $ is the $\|\cdot\|_{L^1(\T)}$-norm of $K$. 
We will construct $K$ using two Fej\'er kernels. 
We recall that the Dirichlet kernel is defined by 
\begin{equation*}
D_n(t)=\sum_{k=-n}^n e^{ikt}, 
\end{equation*}
and the Fej\'er kernel by 
\begin{equation}
F_n(t)=\frac{1}{n}\sum_{k=0}^{n-1}D_n=\left(\frac{\sin \frac{nt}{2}}{\sin \frac{t}{2}} \right)^2.\label{fejpos1}
\end{equation}

Thus, for any $m>l$,  
\begin{eqnarray*}
\sum_{k=0}^m D_k-\sum_{k=0}^lD_k &=& (m-l)\sum_{k=0}^{2l-m}(e^{ikt}+e^{-ikt})\\
&&\hspace{.5cm} +\sum_{k=1}^{2(m-l)}\left(m-l-\frac{k}{2}\right)\left(e^{i(2l-m+k)t}+e^{-i(2l-m+k)t}\right). 
\end{eqnarray*}

Using the positivity given in equation~(\ref{fejpos1}), 
we obtain 
\begin{eqnarray}
\|K \|_{L^1([0,1])} &=& \int_0^1\left|\frac{1}{N-r_L} \left( \sum_{k=0}^{N}D_k(t) -\sum_{k=0}^{r_L}D_k(t)\right)\right|dt\nonumber\\
&\leq & \frac{1}{N-r_L}\int_0^1\sum_{k=0}^{\lambda r_L}D_k(t) +\sum_{k=0}^{r_L}D_k(t)dt\nonumber\\
&\leq & \frac{2 N}{N-r_L}\nonumber\\
&\leq & \frac{2\lambda }{\lambda-1}.\label{121234}
\end{eqnarray}
Returning to line~(\ref{56}), we have 
\begin{equation*}
\|f+g_{N}\|_{L^\infty(\T)}\leq \frac{2\lambda}{\lambda-1}C_{Ex}\|a\|_{l^2(\Z)},
\end{equation*}
where the Fourier expansion of $g_{N}$ is supported on $\{N,...,N\}\backslash I_N$.

Returning to equation~(\ref{5656}), we now have
\begin{eqnarray*}
\|a\|_{l^2_N}&\leq &  \frac{2\lambda B(\lambda,1)}{\lambda-1} \frac{1}{N}\sum_{l=0}^N\left|f\left(\frac{2\pi l}{N}\right)\right|\\
&=&  \frac{2 \lambda B(\lambda,1)}{\lambda-1} \frac{1}{\sqrt{N}}\sum_{l=0}^N\left| \frac{1}{\sqrt{N}}\sum_{k=1}^{L}a_{r_k}e^{2\pi ir_k \frac{l}{N}} \right|\\
&=&  \frac{2\lambda B(\lambda,1)}{\lambda-1} \frac{1}{\sqrt{N}}\|Fa\|_{l^1_{N}}.
\end{eqnarray*}
\end{proof}

\end{subsection}

\end{section}

\begin{section}{The Walsh or CDMA Case}
\begin{subsection}{PAPR and Density for Walsh or CDMA Systems}\label{walsh}
PAPR has been studied extensively in OFDM systems. 
In this section we show that the same type of behavior occurs as well in the down-link 
of Direct Sequence Code Division Multiple Access (DS-CDMA) systems. 
We assume again, without loss of generality, that the symbol period is normalized to length $1$. 
Assume that a base station communicates with $N=2^n$ users. 
DS-CDMA then uses $2^n$ orthogonal functions which take the values $1$ or $-1$ on 
intervals of length $2^{-n}$. 
These functions are the spreading sequences. 
We denote these sequences $\{w_k\}_{k=1}^{2^n}$, and will give their specific values shortly. 
The base station then transmits 
\begin{equation}
s(t)=\sum_{k=1}^{2^n} d_kw_k(t)\;\;\;\textnormal{ for }t\in[0,1]\label{cdma}
\end{equation}
to transmit the coefficient $d_k$ to user $k$. 

However, Proposition~\ref{sqN} applies here, and so  the function in equation~(\ref{cdma}) 
can achieve peaks of size $\sqrt{2^n}$. 
We note that in the up-link, each user only transmits one signal $w_k$, and  so there is not the accumulation of signals 
that leads to the high peaks that can occur in the down-link. 
Thus, in order to reduce the PAPR of DS-CDMA down-link signals one could reserve certain spreading sequences  to be used 
for compensation in analogy to tone reservation for OFDM. 

There are several ways to define the Walsh system, though the definitions only involve a different ordering. 
We present one definition now, and will comment on another in Section~\ref{matrixsection}. 
The various definitions made be found in the first several pages of~\cite{SWS90}. 
The following is the original ordering given by Walsh~\cite{Wal23}, 
and is the system used in the down-link for IS-95 standard and UMTS-IMT-2000.
\begin{definition}
The \emph{Rademacher functions}, denoted $r_0,r_1,...$,  are defined on $[0,1]$ by 
\begin{equation}
r_k(t)=\sign \sin (2\pi 2^k t),
\end{equation}
where we define $\sign\; 0 =-1$. 
The \emph{Walsh functions}, denoted $w_1,w_2,...$  are defined using the Rademacher functions by 
\begin{equation}
w_{1}(t)=1
\end{equation}
and
\begin{equation}
w_{2^k+m}(t)=r_k(t)\cdot w_m(t)
\end{equation}
for $k=0,1,2,...$ and $m=1,...,2^k$. 
\end{definition}
See \cite{Fin49} for the fundamental properties of the Walsh functions. 

We are able to obtain specific bounds on $C_{\Ex}$ in the Walsh case and, therefore, can state more precise results than in the discrete Fourier case. 
Our main result concerning the Walsh functions is the following. 

\begin{theorem}\label{mainwalsh}
Let $\delta\in (0,1)$ be a density and assume that $N=2^n$ $(n\in \N)$ satisfies $N\geq (\frac{2}{\delta})^{m+1}$ for some $m\in \N$.  
If the PAPR problem is solvable with constant $C_{\Ex}$ for a subset of indices $Y\subset \{1,...,N\}$ for $|Y|/N\geq \delta$, then 
\begin{equation}
C_{\Ex}\geq \frac{2^m-m^2}{1+m}.
\end{equation}
\end{theorem}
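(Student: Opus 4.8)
The plan is to follow the same two-step strategy as in the Fourier case: first convert solvability into an $L^1$--$L^2$ norm inequality on the information span, and then defeat that inequality with a single explicit, highly concentrated test function built from Walsh functions indexed by $Y$. For the first step I would invoke Corollary~\ref{corequiv}, applied to the Walsh system: if the PAPR problem is solvable with constant $C_{\Ex}$ for the index set $Y$, then $\|f\|_{L^2([0,1])}\le C_{\Ex}\|f\|_{L^1([0,1])}$ for every $f=\sum_{k\in Y}a_kw_k$. Consequently it suffices to produce one such $f$ with $\|f\|_{L^2([0,1])}/\|f\|_{L^1([0,1])}\ge (2^m-m^2)/(1+m)$; the lower bound on $C_{\Ex}$ then follows at once. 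Thus the entire content of the theorem reduces to constructing a peaky Walsh polynomial whose indices lie in $Y$, together with a lower bound on its $L^2/L^1$ ratio.

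The tool that makes concentration possible is the multiplicative dyadic structure of the Walsh system. The relations $w_1\equiv 1$ and $w_{2^k+m}=r_k w_m$ organize $\{1,\dots,2^n\}$ into dyadic blocks $B_k=\{2^k+1,\dots,2^{k+1}\}$, and the model perfect kernel $\sum_{k=1}^{2^j}w_k=2^j\,\mathbf{1}_{[0,2^{-j})}$ shows that a full initial segment of Walsh indices collapses to a spike with $\|\cdot\|_{L^1([0,1])}=1$ and $\|\cdot\|_{L^2([0,1])}=2^{j/2}$. The difficulty is that $Y$ is an arbitrary set of density $\ge\delta$, so I cannot expect it to contain a full initial segment, a coset of a large subgroup, or even a large coordinate subcube, since dense subsets of the dyadic group need not. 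Hence I would instead extract, by an iterated pigeonhole over the block decomposition, a structured family of indices inside $Y$ on which a spike can be approximately assembled. The key accounting is that each of the $m+1$ refinement stages restricts attention to a sub-block in which $Y$ is still dense, and maintaining density $\ge\delta$ through a stage costs a factor $2/\delta$ in the ambient size; surviving $m+1$ such stages is exactly what the hypothesis $N\ge(2/\delta)^{m+1}$ buys.

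With this index structure in hand I would write the test function $f=\sum_k a_k w_k$ as a Walsh spike, that is, a Fej\'er-type kernel adapted to the extracted indices, compute $\|f\|_{L^2([0,1])}$ exactly from orthonormality, and bound $\|f\|_{L^1([0,1])}$ from above using the positivity and telescoping available for Walsh--Fej\'er kernels, in the same spirit as the Dirichlet-kernel bound in the proof of Theorem~\ref{discretenotsolvable}. The arithmetic should be arranged so that the main term of the ratio is $2^m/(1+m)$, while the subtracted $m^2/(1+m)$ accounts for the finitely many indices that had to be dropped because they were missing from $Y$; this imperfection is precisely what separates the bound from the clean subgroup value and is the price of working with an arbitrary dense set. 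Combining the ratio estimate with the first paragraph yields $C_{\Ex}\ge(2^m-m^2)/(1+m)$.

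The main obstacle is the $L^1$ control of this imperfect spike. A clean subgroup sum only yields a ratio of order $2^{d/2}$, whereas the theorem asserts the much stronger order $2^m/m$; reaching it forces the use of nontrivial coefficients and a kernel that is not merely an indicator, so the cancellation that keeps $\|f\|_{L^1([0,1])}$ small is delicate and must survive the removal of the indices absent from $Y$. Showing that the omitted indices degrade the $L^1$ estimate by no more than the lower-order $m^2/(1+m)$ term, rather than destroying the concentration outright, is where the real work lies, and it is also where the explicit, Walsh-specific combinatorics, as opposed to the black-box Szemer\'edi input used for the OFDM case, earns the more precise bound advertised for this system.
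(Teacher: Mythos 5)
Your first step is exactly the paper's: Corollary~\ref{corequiv} reduces everything to exhibiting a single $f$ with Walsh spectrum in $Y$ and a large ratio $\|f\|_{L^2}/\|f\|_{L^1}$, and your guess for the shape of $f$ --- a product kernel $\prod_{l=1}^{m}(1+w^{(l)})$ whose positivity controls the $L^1$ norm, corrected by subtracting the few terms whose indices fall outside $Y$ --- is also the paper's (Lemma~\ref{mainlemmawalsh}). The genuine gap is the mechanism that finds the generators $w^{(1)},\dots,w^{(m)}$. Your ``iterated pigeonhole over the block decomposition,'' which keeps $Y$ dense inside successively smaller dyadic blocks, cannot do this job: density of $Y$ inside a block confers no multiplicative structure (as you yourself note, dense subsets of the dyadic group need not contain subcubes or cosets), and the spike needs indices closed under products. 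The paper's extraction is different and is the heart of the proof: it defines the correlation $C(w_r,I)=\int_0^1 w_r(x)\,|\sum_{k\in I}w_k(x)|^2\,dx$, shows it equals the number of $k\in I$ admitting $\tilde k\in I$ with $w_k w_{\tilde k}=w_r$ (Lemma~\ref{lemma4.4}), and uses the identity $\sum_{r=1}^{N}C(w_r,I)=|I|^2$ (Lemma~\ref{sumC}) to average: some $w^{(1)}\neq w_1$ has $C(w^{(1)},I)\ge |I|^2/N\ge\delta|I|$. Keeping one index from each matched pair produces $I^{(1)}\subset I$ with $|I^{(1)}|\ge(\delta/2)|I|$, and one recurses inside $I^{(1)}$; the hypothesis $N\ge(2/\delta)^{m+1}$ is exactly what lets $m$ rounds survive. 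This energy/averaging argument is what converts density into product structure, it guarantees that every product of two or more generators (times the base pair) lands in $I$ so that only the constant and the $m$ generators themselves must be subtracted, and it is entirely absent from your plan --- you have correctly located the hard step and then deferred it.

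Second, the obstacle you flag at the end --- that an indicator-type spike only yields a ratio of order $2^{m/2}$, so ``nontrivial coefficients'' must be found to reach $2^m/m$ --- cannot be overcome, and you should not try. By Cauchy--Schwarz, any $f$ with Walsh spectrum in $Y$ satisfies $\|f\|_{L^\infty([0,1])}\le\sqrt{|Y|}\,\|f\|_{L^2([0,1])}$, hence $\|f\|_{L^2([0,1])}\le\sqrt{|Y|}\,\|f\|_{L^1([0,1])}$; for $\delta$ close to $1$ and $N$ near $(2/\delta)^{m+1}$ this cap $\sqrt{|Y|}$ is far below $(2^m-m^2)/(1+m)$, so no choice of coefficients can realize the printed ratio via Corollary~\ref{corequiv}. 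What the paper's construction actually establishes is $\|f\|_{L^2([0,1])}^2\ge 2^m-m^2$ (a \emph{squared} norm, since the product expands into roughly $2^m$ orthonormal terms with unit coefficients) together with $\|f\|_{L^1([0,1])}\le 1+m$; the statement of Lemma~\ref{mainlemmawalsh}, and hence of the theorem, silently drops the square root, and the bound this route really delivers is $C_{\Ex}\ge\sqrt{2^m-m^2}/(1+m)$. That still diverges as $m\to\infty$, so the qualitative conclusion --- no uniform extension constant at fixed positive density --- is unaffected, but your plan to reach the literal printed bound with a cleverer kernel is chasing something this method cannot give.
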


There are several necessary lemmas before we can prove the theorem. 
We begin with a definition. 

\begin{definition}
The \emph{correlation function} between $w_r$ and $I$ is 
\begin{equation}
C(w_r,I)=\Izo w_r(x)|\skI w_k(x)|^2dx.
\end{equation}
Further, for $w_r$ and $I$ we define the following set
\begin{equation}
\M(w_r,I)=\{ k\in I: \textnormal{ there exists } \kt \in I \textnormal{ so that } w_k(x)w_{\kt}(x)=w_r(x) \textnormal{ for all }  x\in [0,1]\}.
\end{equation}
\end{definition}
We could equivalently look at the set of all pairs $(k,\kt)$ such that $w_kw_{\kt}=w_r$, 
which would always include the permutation $(\kt,k)$. 
For every $k\in \MrI$, there is exactly one $\kt$ in $\MrI$ such that the pair $k$ and $\kt$ satisfy the requirement given for $\MrI$. 
To see this, suppose that $\kt_1$ and $\kt_2$ both satisfy 
\begin{equation}
w_kw_{\kt_1}=w_r=w_kw_{\kt_2}. \label{kkt}
\end{equation}
Then, since $w_kw_k=1$ for any $k$, multiplying (\ref{kkt}) by $w_k$ we have 
\begin{equation}
w_{\kt_1}=w_{\kt_2}.
\end{equation}

The reader unfamiliar with the Walsh functions may find it helpful to read the proof of Theorem~\ref{walsh-proj}. 
The properties used in the proof below are contained in the latter proof. 
In particular, the central idea of the proof of the main lemma, Lemma~\ref{mainlemmawalsh}, relies on the identity in Equality~(\ref{987}). 
This Equality gives a representation of the sum of Walsh functions as a product of factors $(1+r_k)$ for the appropriate Rademacher functions. 
This property and the fact that multiplying the set of Walsh functions (other than $w_1$) by a Walsh function gives a permutation of the 
Walsh functions lead to the idea of expressing a linear combination of Walsh functions as a product. 
It this is done correctly, one can obtain the $L^1$ and $L^2$ properties necessary for the theorem. 
\begin{lemma}\label{lemma4.4}
The following equality holds
\begin{equation}
C(w_r,I)=|\MrI|.
\end{equation}
\end{lemma}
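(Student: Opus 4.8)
The plan is to expand the squared modulus and then exploit the group structure of the Walsh system under pointwise multiplication. Since every Walsh function is real and takes only the values $\pm 1$, I would first write
\[
\left|\skI w_k(x)\right|^2 = \left(\skI w_k(x)\right)^2 = \sum_{k\in I}\sum_{\kt\in I} w_k(x)w_{\kt}(x),
\]
so that, substituting into the definition of the correlation function and interchanging the finite sum with the integral,
\[
C(w_r,I)=\sum_{k\in I}\sum_{\kt\in I}\Izo w_r(x)w_k(x)w_{\kt}(x)\,dx.
\]

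The key structural fact I would then invoke is that a product of Walsh functions is again a Walsh function: from $w_1=1$, $w_{2^k+m}=r_k w_m$, and $r_k^2=1$, the family $\{w_n\}$ is closed under multiplication and each $w_n$ is its own inverse (since $w_n^2=1$). Consequently $w_r w_k w_{\kt}$ is a single Walsh function, and by orthonormality $\Izo w_j(x)\,dx=\Izo w_j(x)w_1(x)\,dx$ equals $1$ when $w_j=w_1$ and $0$ otherwise. Hence each summand $\Izo w_r w_k w_{\kt}\,dx$ equals $1$ exactly when $w_k w_{\kt}=w_r$ (equivalently $w_r w_k w_{\kt}=1$, using that $w_r$ is its own inverse) and vanishes otherwise. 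This reduces $C(w_r,I)$ to the number of ordered pairs $(k,\kt)\in I\times I$ with $w_k w_{\kt}=w_r$.

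Finally, I would translate this count of pairs into $|\MrI|$ using the uniqueness observation recorded just before the lemma: for each $k\in\MrI$ there is exactly one $\kt\in I$ with $w_k w_{\kt}=w_r$, while for $k\notin\MrI$ no such $\kt$ exists. Thus $k\mapsto(k,\kt)$ is a bijection between $\MrI$ and the set of admissible ordered pairs, and the count of pairs is precisely $|\MrI|$, giving $C(w_r,I)=|\MrI|$.

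The only genuine obstacle is justifying the closure (group) property of the Walsh system — that $w_k w_{\kt}$ is always one of the $w_j$ — together with the resulting ``delta'' evaluation of the triple integral; once these are in hand the counting is immediate. I expect this closure property to be the step worth spelling out, since it follows from the binary/dyadic description of the indices, where multiplication of Walsh functions corresponds to bitwise addition of the index bits, whereas the remaining steps are routine consequences of orthonormality and of the uniqueness already established in the text.
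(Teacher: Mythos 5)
Your proposal is correct and follows essentially the same route as the paper's proof: expand the square into a double sum, interchange sum and integral, use closure of the Walsh system under pointwise multiplication so that each integral $\Izo w_r w_k w_{\kt}\,dx$ acts as a delta detecting $w_kw_{\kt}=w_r$, and then count the admissible ordered pairs via the uniqueness remark preceding the lemma. Your write-up is in fact slightly more explicit than the paper's, which compresses the final counting step into a reference to that remark.
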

\begin{proof}
\begin{eqnarray}
C(w_r,I)&=& \Izo w_r(x)\left( \sum_{k_1\in I} w_{k_1}(x)\right)\left( \sum_{k_2\in I} w_{k_2}(x)\right)dx\\
&=& \sum_{k_1\in I}\sum_{k_2\in I}\Izo w_r(x)w_{k_1}(x)w_{k_2}(x)dx.\label{rk1k2}
\end{eqnarray}
For $1\leq r,k_1,k_2\leq N$, $w_rw_{k_1}w_{k_2}=w_{\tilde{r}}$ for some $1\leq \tilde{r}\leq N$. 
Therefore the integral in (\ref{rk1k2}) is only nonzero when $w_rw_{k_1}w_{k_2}=w_1=1$, 
that is, when $w_r=w_{k_1}w_{k_2}$. 
This is the  set of all pairs of $k_1$ and $k_2$ such that this holds.  
In light of the comment following the definition of $\MrI$, the lemma is proved.
\end{proof}
An important monotonicity property follows from this lemma. 
\begin{corollary}
Assume  that $C(r,I)=0$. 
Then $C(r,\tilde{I})=0$ for all $\tilde{I}\subset I$. 
\end{corollary}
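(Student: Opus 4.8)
The plan is to reduce the analytic statement to the combinatorial characterization supplied by Lemma~\ref{lemma4.4}, which identifies the correlation function with the cardinality $|\M(w_r,I)|$. First I would note that the hypothesis $C(w_r,I)=0$ is, by that lemma, equivalent to $|\M(w_r,I)|=0$, i.e. to $\M(w_r,I)=\emptyset$: there is no index $k\in I$ admitting a partner $\tilde{k}\in I$ with $w_kw_{\tilde{k}}=w_r$.

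The core step is then the monotonicity $\M(w_r,\tilde{I})\subseteq\M(w_r,I)$ whenever $\tilde{I}\subseteq I$. This is immediate from the definition: if $k\in\M(w_r,\tilde{I})$, then $k\in\tilde{I}\subseteq I$ and there exists $\tilde{k}\in\tilde{I}\subseteq I$ with $w_kw_{\tilde{k}}=w_r$. Both the membership requirement on $k$ and the existence-of-a-partner requirement on $\tilde{k}$ are preserved when the index set is enlarged from $\tilde{I}$ to $I$, so $k\in\M(w_r,I)$.

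Combining the two observations closes the argument. Since $\M(w_r,I)=\emptyset$ and $\M(w_r,\tilde{I})\subseteq\M(w_r,I)$, we get $\M(w_r,\tilde{I})=\emptyset$, hence $|\M(w_r,\tilde{I})|=0$; applying Lemma~\ref{lemma4.4} once more yields $C(w_r,\tilde{I})=0$.

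I do not expect a genuine obstacle here. The only point requiring a moment's care is that the defining condition for $\M$ demands the partner index $\tilde{k}$ lie in the \emph{same} index set as $k$, so in passing to the smaller set $\tilde{I}$ one must verify that both indices remain available—which is exactly what $\tilde{I}\subseteq I$ guarantees for the set inclusion (and what would fail under the reverse inclusion). Arguing directly from the signed integral would be less transparent, since the weight $w_r$ changes sign and no monotonicity is visible at the level of the integrand; routing through Lemma~\ref{lemma4.4} converts the analytic claim into the elementary set inclusion above.
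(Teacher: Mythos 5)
Your proof is correct and follows essentially the same route as the paper: both arguments pass through Lemma~\ref{lemma4.4} to convert $C(w_r,\cdot)$ into the cardinality of $\M(w_r,\cdot)$ and then use the monotonicity of that set under inclusion of index sets. The only difference is that you spell out the set inclusion $\M(w_r,\tilde{I})\subseteq\M(w_r,I)$ explicitly, whereas the paper leaves that step implicit in a one-line chain of inequalities.
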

\begin{proof}
This follows from
\begin{equation}
0=C(w_r,I)=|\MrI|\geq |\M(w_r,\tilde{I})|=C(w_r,\tilde{I}).
\end{equation}
\end{proof}
\begin{lemma}\label{sumC}
If $N=2^n$ for a positive integer $n$, then 
\begin{equation}
\sum_{r=1}^NC(w_r,I)=|I|^2.
\end{equation}
\end{lemma}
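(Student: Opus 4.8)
The plan is to reduce the claim immediately to a counting identity by invoking the lemma just proved. By Lemma~\ref{lemma4.4} we have $C(w_r,I)=|\MrI|$ for every $r$, so it suffices to show that $\sum_{r=1}^N |\MrI| = |I|^2$. I would read the right-hand side as the cardinality of $I\times I$, i.e. the number of ordered pairs $(k_1,k_2)$ with both entries in $I$, and then match these pairs against the sets $\MrI$.

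The structural fact I would use is that when $N=2^n$ the Walsh functions $\{w_1,\dots,w_N\}$ are closed under pointwise multiplication: for any $1\le k_1,k_2\le N$ there is a unique index $r$ with $w_{k_1}w_{k_2}=w_r$. This is exactly the observation already exploited in the proof of Lemma~\ref{lemma4.4}, namely that $w_{k_1}w_{k_2}=w_{\tilde r}$ for some $1\le\tilde r\le N$. Consequently the map $(k_1,k_2)\mapsto r$ partitions $I\times I$ according to the value of the product, so summing the fiber sizes over $r$ recovers $|I\times I|=|I|^2$. It then remains only to identify the fiber over $r$, i.e. the set of pairs $(k_1,k_2)\in I\times I$ with $w_{k_1}w_{k_2}=w_r$, with $\MrI$.

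For this identification I would invoke the remark following the definition of $\MrI$: for each $k\in\MrI$ there is exactly one $\kt\in I$, which then also lies in $\MrI$, with $w_k w_{\kt}=w_r$. Hence $k\mapsto(k,\kt)$ is a bijection from $\MrI$ onto the fiber over $r$, giving $|\{(k_1,k_2)\in I\times I : w_{k_1}w_{k_2}=w_r\}|=|\MrI|$. Summing over $r$ then yields $\sum_{r=1}^N|\MrI|=|I|^2$, which is the assertion.

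The main obstacle lies entirely in the closure property, and this is precisely where the hypothesis $N=2^n$ enters: only when the index range is a full dyadic block $\{1,\dots,2^n\}$ is the product $w_{k_1}w_{k_2}$ guaranteed to be one of the listed Walsh functions, so that every pair contributes to exactly one $r\le N$ and no mass escapes the range of summation. As an alternative route that makes the role of $N=2^n$ equally transparent, I could interchange the sum and the integral in the definition of $C$ and evaluate $\sum_{r=1}^N C(w_r,I)=\Izo\bigl(\sum_{r=1}^N w_r(x)\bigr)\,|\skI w_k(x)|^2\,dx$; using the product identity of Equality~(\ref{987}) one has $\sum_{r=1}^N w_r=\prod_{k=0}^{n-1}(1+r_k)$, which equals $N$ on $[0,2^{-n})$ and vanishes elsewhere, while $\skI w_k\equiv|I|$ on $[0,2^{-n})$, so the integral collapses to $N\cdot 2^{-n}\cdot|I|^2=|I|^2$. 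I expect the combinatorial argument to be cleanest, since it builds directly on the lemma and the remark already in place.
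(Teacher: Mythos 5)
Your proposal is correct, but your primary argument is not the paper's; interestingly, your ``alternative route'' is. The paper proves the lemma purely analytically and does not invoke Lemma~\ref{lemma4.4} at all: it interchanges the sum over $r$ with the integral, uses the facts that every $w_r$ with $1\leq r\leq N$ equals $1$ on $[0,\frac{1}{N})$ and that $\sum_{r=1}^N w_r(x)=0$ for $x\in[\frac{1}{N},1]$ when $N=2^n$ (citing \cite{Fin49} rather than deriving it from the product identity), so that
\begin{equation*}
\sum_{r=1}^N C(w_r,I)=N\int_0^{\frac{1}{N}}\Big|\skI w_k(x)\Big|^2dx=N\int_0^{\frac{1}{N}}|I|^2dx=|I|^2,
\end{equation*}
exactly the collapse you sketch at the end. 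Your main, combinatorial route is a genuinely different and equally valid proof: by Lemma~\ref{lemma4.4} the sum equals $\sum_{r=1}^N|\M(w_r,I)|$; for $N=2^n$ the block $\{w_1,\dots,w_N\}$ consists of all products of $r_0,\dots,r_{n-1}$ and is therefore closed under pointwise multiplication (a group isomorphic to $(\Z/2\Z)^n$), so the map $(k_1,k_2)\mapsto r$ defined by $w_{k_1}w_{k_2}=w_r$ partitions $I\times I$ into fibers indexed by $r\in\{1,\dots,N\}$, and your bijection $k\mapsto(k,\kt)$ (injective on first coordinates, surjective by uniqueness of $\kt$) identifies the fiber over $r$ with $\M(w_r,I)$. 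Each approach buys something: the paper's computation is shorter and logically independent of Lemma~\ref{lemma4.4}, while yours makes completely transparent \emph{where} the hypothesis $N=2^n$ enters --- without closure, pairs whose product has index beyond $N$ escape the summation range, and indeed for $N=3$, $I=\{2,3\}$ one gets $\sum_{r=1}^3 C(w_r,I)=2<4$ --- and it isolates the same multiplicative group structure that drives the splitting construction in Lemma~\ref{mainlemmawalsh}. One cosmetic point: both your fiber argument and the paper's evaluation ignore measure-zero sets (e.g., with the convention $\sign 0=-1$ the Walsh functions are not all $1$ at $x=0$), but since $C$ is defined by an integral and the product identities hold pointwise, this is harmless in both treatments.
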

\begin{proof}
Recall that for $x\in [0,\frac{1}{N})$, $w_r(x)=1$ for all $r$. 
And, since $N=2^n$, $\sum_{r=1}^Nw_r(x)=0$ for all $x\in [\frac{1}{N},1]$. 
See, for example, \cite{Fin49}.  
Therefore, 
\begin{eqnarray}
\sum_{r=1}^NC(w_r,I)&=& \Izo \sum_{r=1}^Nw_r(x)|\skI w_k(x)|^2dx\\
&=& N\int_0^{\frac{1}{N}}|\skI w_k(x)|^2dx\\
&=& N \int_0^{\frac{1}{N}}|I|^2dx\\
&=& |I|^2.
\end{eqnarray}
\end{proof}

\begin{definition}
If $f$ is a linear combination of Walsh functions we define $\supp(f)$ to be the set of indices corresponding 
to the functions in the sum. 
\end{definition}

\begin{lemma}\label{mainlemmawalsh}
Let $\delta$ be a density, that is $\delta\in (0,1)$. 
For any $m>1$,  if  $N=2^n$ $(n\in \N)$ satisfies $N\geq (\frac{2}{\delta})^{m+1}$, 
then for any subset $I\subset \{1,...,N\}$ satisfying $\frac{|I|}{N}\geq \delta$, 
there exists a function $f$ supported on $I$ such that 
\begin{equation}
\|f\|_{L^2([0,1])}\geq 2^m-m^2\; \textnormal{ and } \;\|f\|_{L^1([0,1])}\leq 1+m.
\end{equation}
\end{lemma}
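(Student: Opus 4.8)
The plan is to produce $f$ as a sum of Walsh functions whose index set is a multiplicative coset lying inside (or almost inside) $I$. The reason this shape is essentially forced is the identity the authors single out, which rewrites a sum of Walsh functions over a subgroup generated by Rademacher-type factors as a product $\prod_j(1+v_j)$: such a product is supported on a single short dyadic interval, so its $L^1$ norm is essentially $1$, while orthonormality of the Walsh system makes its $L^2$ norm the square root of the number of surviving terms. Thus the whole problem reduces to locating a large multiplicative coset inside a density-$\delta$ set, and the correlation function $C(w_r,I)$ together with Lemmas~\ref{lemma4.4} and~\ref{sumC} is exactly the instrument for this.

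To find the coset I would iterate. Lemma~\ref{sumC} gives $\sum_{r=1}^N C(w_r,I)=|I|^2$, and since the term $r=1$ contributes only $|I|$, averaging over the remaining $r$ produces a nontrivial $w_{r_1}$ with $C(w_{r_1},I)\gtrsim |I|^2/N\geq \delta|I|$; by Lemma~\ref{lemma4.4} this equals $|\M(w_{r_1},I)|$, that is, a large family of pairs $k,\kt\in I$ with $w_k w_{\kt}=w_{r_1}$. Replacing $I$ by the intersection $I\cap w_{r_1}I$ (whose size is precisely $C(w_{r_1},I)$) and repeating yields generators $w_{r_1},w_{r_2},\dots$ whose products stay inside $I$, the common refinement being a coset of the subgroup they generate. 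The monotonicity corollary ensures the correlation does not spuriously vanish as we pass to subsets, and the hypothesis $N\geq(2/\delta)^{m+1}$ is the reservoir of density needed to run this averaging step enough times before the working set is exhausted.

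With the generators in hand I would assemble $f$ as the sum of the retained Walsh functions and invoke the product representation to write it, up to a fixed Walsh factor fixing the coset representative, as $\prod_j(1+v_j)$ minus the handful of terms whose generated index happened to fall outside $I$. For the upper bound I use that the product itself integrates to essentially $1$ and that each defect term is a single Walsh function contributing at most $1$ to the $L^1$ norm, so $\|f\|_{L^1}\leq 1+m$ once the number of defects is held to at most $m$. For the lower bound I use orthonormality: $\|f\|_{L^2}^2$ equals the number of retained indices, namely the full coset size less the defects, and controlling that shortfall gives $\|f\|_{L^2}\geq 2^m-m^2$.

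The hard part will be the simultaneous bookkeeping of the two error budgets against the single density reservoir. Each averaging step degrades the guaranteed density, and one must show the cumulative degradation over the required number of rounds is covered by the exponent $m+1$ in $N\geq(2/\delta)^{m+1}$; at the same time the number of generated indices escaping $I$ must be kept linear in $m$ (to secure $\|f\|_{L^1}\leq 1+m$) while the deficit in retained indices must be kept quadratic in $m$ (to secure $\|f\|_{L^2}\geq 2^m-m^2$). Balancing these competing demands is the delicate step, and the structural facts that make it tractable are exactly the two correlation identities of Lemmas~\ref{lemma4.4} and~\ref{sumC} together with the observation that multiplying the Walsh system by a fixed Walsh function merely permutes it, so that cosets map to cosets and the product representation survives intact from one round to the next.
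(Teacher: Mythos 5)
Your outline reproduces the paper's strategy --- the correlation identities of Lemmas~\ref{lemma4.4} and~\ref{sumC}, an averaging step producing a Walsh function $w_{r_1}$ highly correlated with $I$, an iterated refinement, and a final function which is (up to a Walsh factor) $\prod_{l=1}^m(1+w^{(l)})$, bounded above in $L^1$ by positivity of the factors and below in $L^2$ by orthonormality --- but the refinement you chose breaks down at the second round, and the device the paper introduces to prevent exactly this is absent from your sketch. If $I_1=I\cap w_{r_1}I=\mathcal{M}(w_{r_1},I)$, then $I_1$ is invariant under multiplication by $w_{r_1}$ (each $k\in I_1$ has its partner $w_{r_1}w_k$ in $I_1$), so by Lemma~\ref{lemma4.4} we get $C(w_{r_1},I_1)=|I_1|$, the largest value any correlation can take. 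Hence the maximizer at round two is $w_{r_1}$ itself, and the averaging bound $|I_1|^2/N\leq |I_1|$ does nothing to exclude it; at later rounds the same problem recurs with every element of the subgroup already generated (nothing forbids, say, $w_{r_3}=w_{r_1}w_{r_2}$). This is fatal rather than cosmetic: if the group $G$ generated by $w_{r_1},\dots,w_{r_m}$ has order $2^s$ with $s<m$, then $\prod_{l=1}^m(1+w_{r_l})=2^{m-s}\sum_{g\in G}w_g$, whose $L^2$-to-$L^1$ ratio is only $2^{s/2}$; for the stalled iteration ($s=1$) this is $\sqrt{2}$, and the conclusion of the lemma disappears. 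Your proposal never guarantees independence of the generators, i.e.\ that $|G|=2^m$, which is precisely what the $L^2$ lower bound requires.

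The paper secures this with the splitting of the matching set: $\mathcal{M}(w^{(j)},\cdot)$ is written as a disjoint union $I^{(j)}_A\cup I^{(j)}_B$ with $w^{(j)}I^{(j)}_A=I^{(j)}_B$, one element of each matched pair on each side, and the iteration continues on the half $I^{(j)}_A$ only. Then $C(w^{(j)},I^{(j)}_A)=0$, and by the monotonicity corollary this correlation stays zero on all later (nested) sets, so $w^{(j)}$ can never be selected again; moreover, in the build-up $f^{(j)}=(1+w^{(j)})f^{(j+1)}$ one has $\supp(f^{(j+1)})\subset I^{(j)}_A$ and $\supp(w^{(j)}f^{(j+1)})\subset I^{(j)}_B$, so the number of \emph{distinct} Walsh terms doubles at every step, which simultaneously keeps all $2^m$ terms inside $I$ and forces independence of $w^{(1)},\dots,w^{(m)}$ a posteriori. (An alternative repair of your version: at round $j$ maximize $C(w_r,I_{j-1})$ only over $w_r$ outside the subgroup generated so far; averaging still works since that subgroup has only $2^{j-1}$ elements.) The halving is also what produces the per-round estimate $|I^{(k)}|\geq\frac{\delta}{2}|I^{(k-1)}|$ by which the paper spends the hypothesis $N\geq(2/\delta)^{m+1}$ --- the bookkeeping you correctly flagged as ``the hard part'' but left open. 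A final caution: from ``$\|f\|_{L^2}^2$ equals the number of retained terms'' one obtains an $L^2$ bound of order $2^{m/2}$, not $2^m-m^2$; the paper's own argument in fact lower-bounds the \emph{square} of the $L^2$ norm by $2^m-m^2$, and your write-up inherits this same discrepancy with the stated inequality.
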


\begin{proof}
First we just take $N$ to be large and $I$ to be a subset of $\{1,...,N\}$. 
We will construct a function with the norm properties given, and then we show that $m$ can be made large in dependence on $\delta$ and $N$. 

Suppose that $\MrI=\{l_1,...,l_{2k}\}$ . 
Then $w_{r}\cdot\{ w_{l_1},...,w_{l_{2k}}\}=\{w_{l_1},...,w_{l_{2k}}\}$. 
Also, suppose that $w_{l_j}w_{l_{j+1}}=w_r$, for $j=1,...,k$. 
Then $w_rw_{j}=w_{j+1}$ for each $j$. 
Therefore, we may select the subset $\{w_{l_1},w_{l_3},..., w_{l_{2k-1}}\}$, 
and obtain the properties $w_r\cdot\{w_{l_1},w_{l_3},..., w_{l_{2k-1}}\}=\{w_{l_2},w_{l_4},..., w_{l_{2k}}\}$,  
and $w_r\cdot\{w_{l_1},w_{l_3},..., w_{l_{2k-1}}\}\cap\{w_{l_2},w_{l_4},..., w_{l_{2k}}\}=\emptyset$. 
We use this splitting to select a set that does not allow for any factoring of $\wo$. 
We use the notation $I^{(1)}_A$ and $I^{(1)}_B$ to denote a splitting of $I^{(1)}$ in the way just described.  
Since $C(w^{(1)},I^{(1)})=|\M (\wo, I^{(1)})|$ and $\wo$ cannot be realized as the product of any two Walsh functions with indices in 
$I^{(1)}$, we have $C(w^{(1)},I^{(1)})=0$. 
Lastly, we also have $|I^{(1)}|=\frac{1}{2}|\MrI|$.

Let $\wo$ be the Walsh function with the highest correlation with $I$, that is 
\begin{equation}
C(\wo,I)=\max_{w=w_2,...,w_N}C(w,I). 
\end{equation}
Since $C(w,I)=|\M(w,I)|$, 
the maximizer $\wo$ is the Walsh function whose index corresponds to the subset of $I$ with the greatest number 
of splittings contained in $I$. 

Now set $I^{(2)}=\M(\wo, I^{(1)})$, and define $I^{(2)}_A$ and $I^{(2)}_B$ accordingly. 
We repeat this until ultimately $I^{(m)}=\{i_{m,1},i_{m,2}\}$ for two indices such that $w_{i_{m,1}}w_{i_{m,2}}=w^{(m)}$. 
We have then selected Walsh functions $\wo,w^{(2)}, ...,w^{(m)}$ and 
determined sets 
$I^{(1)}_{A},I^{(1)}_{B} ,...,I^{(m)}_A,I^{(m)}_B$, 
such that 
\begin{equation}
w^{(j)}I^{(j)}_{A}=I^{(j)}_{B},\;I^{(j)}_{A}=w^{(j)}I^{(j)}_{B},\;w^{(j)}I^{(j)}_{A}\cap I^{(j)}_{B} =\emptyset \;\textnormal{ and }I^{(j)}_{A}\cup I^{(j)}_{B}=I^{(j)} \label{123456}
\textnormal{ for }1,...,m.
\end{equation}

Define 
\begin{equation}
\mathcal{F}_0=\bigg\{f: f(x)=\sum_{k\in I} \alpha_k w_k(x),\;\alpha_k\in \{0,1\},\;\textnormal{ at least one } \alpha_k\neq0\bigg\}.
\end{equation}
and for $l=1,...,m$ define
\begin{equation}
\mathcal{F}_l=\bigg\{f: f(x)=\sum_{k\in I^{(l)}} \alpha_k w_k(x),\;\alpha_k\in \{0,1\},\;\textnormal{ at least one } \alpha_k\neq0\bigg\}.
\end{equation}

Now we build up our desired function. 
We have $w^{(m)}=w_{i_{m,1}}w_{i_{m,2}}$,  so that $ w^{(m)}w_{i_{m,2}}=w_{i_{m,1}}$. 
Then
\begin{equation}
(1+w_{i_{m,2}})w^{(m)}=w_{i_{m,1}}+w_{i_{m,2}}\in I^{(m)}\subset I^{(m-1)}_A.\label{12345}
\end{equation}
The two Walsh functions in the sum (\ref{12345}) are unique. 
Thus $(1+w_{i_{m,2}})w^{(m)}\in \F^{(m)}$, and we set $f^{(m)}=(1+w_{i_{m,2}})w^{(m)}$. 
Now we repeat this by looking at
\begin{equation}
(1+w^{(m-1)})f^{(m)}.
\end{equation}
By the properties given in lines (\ref{123456}) and (\ref{12345}), we have $\supp(w^{(m-1)}f^{(m)})\subset I^{(m-1)}_B$, 
while $\supp(f^{(m)})\subset I^{(m-1)}_A$. 
Thus $(1+w^{(m-1)})f^{(m)}$ has four unique terms and is contained in $\F^{(m-1)}$. 
We denote it $f^{(m-1)}$. 
Similarly, $(1+w^{(m-2)})f^{(m-2)}$ has eight unique terms, and is contained in $\F^{(m-2)}$. 
We continue this and ultimately arrive at $f^{(0)}\in \F^{(0)}$. 
$f^{(0)}$ is the sum of $2^{m}$ unique Walsh functions, 
and, in particular, 
\begin{equation}
1+f^{(0)}=\prod_{l=1}^{m}(1+w^{(l)})-\sum_{l=1}^{m}w^{(l)}. 
\end{equation}
This delivers the bounds
\begin{eqnarray*}
\Izo |1+f^{(0)}(x)|^2dx&=&   \Izo \left|\prod_{l=1}^{m}(1+w^{(l)}(x))-\sum_{l=1}^{m}w^{(l)} \right|^2dx\\
&\geq &   \Izo \left|\prod_{l=1}^{m}(1+w^{(l)}(x)) \right|^2dx -m^2\\
&=&\Izo \prod_{l=1}^{m}|1+w^{(l)}(x)|^2dx -m^2\\
&=& \Izo \prod_{l=1}^{m}2(1+w^{(l)}(x))dx-m^2\\
&=&2^{m}-m^2
\end{eqnarray*}
and 
\begin{eqnarray*}
\Izo |1+f^{(0)}(x)|dx&=&   \Izo \left|\prod_{l=1}^{m}(1+w^{(l)}(x))-\sum_{l=1}^{m}w^{(l)} \right|dx\\
&\leq &   \Izo \left|\prod_{l=1}^{m}(1+w^{(l)}(x)) \right|dx +m\\
&=&\Izo \prod_{l=1}^{m}(1+w^{(l)}(x))dx +m\\
&=&1+m.
\end{eqnarray*}

Now it remains to obtain a lower bound on $m$. 
We have 
\begin{equation}
|I^{(1)}|= \half |\M(\wo,I)|.
\end{equation}

Using Lemma~\ref{sumC}, 
\begin{eqnarray*}
|I^{(1)}|&=& \frac{1}{2}|\M(\wo,I)|\\
&=& \frac{1}{2}\max_{w_r\neq w_1}C(w_r,I)\\
&\geq & \half \frac{1}{N}\sum_{r=1}^NC(w_r,I)\\
&=& \half \frac{|I|^2}{N}\\
&\geq&\frac{\delta}{2}|I|. 
\end{eqnarray*}
Similarly, 
\begin{equation}
|I^{(k)}|\geq \frac{\delta}{2}|I^{(k-1)}|\;\;\textnormal{ when  }\; \;|I^{(k-1)}|\geq 4. 
\end{equation}
If the process goes until $m$, then  $|I^{(m)}|\geq (\frac{\delta}{2})^{m}\delta N$. 
If $N\geq (\frac{2}{\delta})^{m+1}$, then $ (\frac{\delta}{2})^{m}\delta N\geq 2$, and, thus, 
the set splitting can be performed $m$ times. 
\end{proof}

\begin{proof}{\bf of Theorem~\ref{mainwalsh}}
By Corollary~\ref{corequiv}, solvability implies that for all $f$ with support in $Y$ we have 
\begin{equation}
\|f\|_{L^2(\T)}\leq C_{\Ex}\|f\|_{L^1(\T)}.
\end{equation}
By Lemma~\ref{mainlemmawalsh}, we have $C_{\Ex}\geq \frac{2^m-m^2}{1+m}$.
\end{proof}

We return to the statement of Theorem~\ref{mainwalsh}. 
Suppose that a certain $\delta$ is fixed. 
Then as $N$ increases, the largest value for $m$ that still satisfies $N\geq (\frac{2}{\delta})^{m+1}$ increases. 
Thus the lower bound on $C_{\Ex}$ increases as $N$ increases. 
That is, given a density $\delta$ and a number of Walsh functions $N$, one knows a lower bound for the best 
possible extension norm. 
In particular, for a fixed $\delta$ there does not exist a uniform extension constant. 

As commented  before Lemma~\ref{lemma4.4}, the two fundamental properties here are the permutation of the Walsh functions 
when multiplied by another Walsh function and the representation of the sum as a product. 
Indeed, we only needed to find $m$ Walsh functions such that their products are all unique and supported in $I$. 
The $m$ functions also need not have indices in $I$. 
The product terms then disappear for the $L^1$-norm but not for the $L^2$-norm. 
The first steps of this approach also work for the Fourier case, since the properties just listed hold there as well. 
One can also define a correlation function and look for splittings of maximum cardinality. 
The approach encounters a difficulty, though, when trying to obtain an upper $L^1$ bound on the product that was easily bounded in the Walsh case. 

\end{subsection}
\begin{subsection}{Solvability of PAPR for Direct Sequence CDMA and Localized Behavior of the Walsh System}\label{RW}
We have several positive results for cases when the density of the information set converges to zero. 
Here, the Fourier and Walsh systems have a property in common. 
In particular, similar to Theorem~\ref{solvelambda} in the Fourier case, we show that if the information set is the dyadic integers, 
then any information bearing signal can be compensated for by a signal supported on the remaining indices. 
This is Theorem~\ref{subsetwalsh}. 
Therefore, the Fourier and Walsh systems behave similarly as far as density and solvability is concerned, 
and differ in terms of their projection properties. 
En route to the last results we require several definitions. 
We finish the section with Theorem~\ref{hadamard}, which gives a matrix embedding for Hadamard matrices.

\begin{theorem}(Khintchine's Inequality, I.B.8 in \cite{Woj91})\label{Khin}
There exist constants $A_p$ and $B_p$, $0 < p<\infty $ such that for all finite sequences of scalars $(a_i)_{i=1}^n$, 
\begin{equation*}
A_p \left\|\sum_{i=1}^n a_ir_i\right\|_{L^p([0,1])} \leq \|a\|_{l^2(\N)}=\left\| \sum_{i=1}^n a_ir_i\right\|_{L^2([0,1])} \leq B_p \left\|\sum_{i=1}^n a_ir_i\right\|_{L^p([0,1])}.\nonumber
\end{equation*}
\end{theorem}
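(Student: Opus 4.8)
The plan is to view $[0,1]$ with Lebesgue measure as a probability space on which the Rademacher functions $r_1,r_2,\dots$ are independent, symmetric, $\pm 1$-valued random variables, and to exploit the sharp exponential integrability this independence provides. Writing $S=\sum_{i=1}^n a_i r_i$ and normalizing so that $\|a\|_{l^2(\N)}=1$, the central equality $\|a\|_{l^2(\N)}=\|S\|_{L^2([0,1])}$ is immediate from the orthonormality relation $\int_0^1 r_i r_j\,dt=\delta_{ij}$. It then remains to establish the two outer inequalities, and I would split the argument according to whether $p\ge 2$ or $0<p<2$.

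First I would dispose of the easy half in each range using the monotonicity of $L^p$-norms on a probability space, namely $\|S\|_{L^p}\le\|S\|_{L^q}$ whenever $p\le q$. For $p\ge 2$ this gives $\|S\|_{L^2}\le\|S\|_{L^p}$, which is the right-hand inequality with $B_p=1$; for $0<p<2$ it gives $\|S\|_{L^p}\le\|S\|_{L^2}$, which is the left-hand inequality with $A_p=1$. Hence only one genuine estimate survives in each regime: an upper bound on $\|S\|_{L^p}$ (the left-hand inequality) when $p\ge 2$, and a lower bound on $\|S\|_{L^p}$ (the right-hand inequality) when $0<p<2$.

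For the upper bound when $p\ge 2$ the key is the elementary estimate $\cosh x\le e^{x^2/2}$. By independence and symmetry,
\begin{equation*}
\int_0^1 e^{\lambda S(t)}\,dt=\prod_{i=1}^n\cosh(\lambda a_i)\le\prod_{i=1}^n e^{\lambda^2 a_i^2/2}=e^{\lambda^2/2}
\end{equation*}
for every $\lambda\in\R$. A Chernoff estimate followed by optimization in $\lambda$ yields the subgaussian tail bound $\big|\{t:|S(t)|>u\}\big|\le 2e^{-u^2/2}$, and inserting this into the layer-cake formula $\int_0^1|S|^p\,dt=\int_0^\infty p\,u^{p-1}\,\big|\{t:|S(t)|>u\}\big|\,du$ gives $\int_0^1|S|^p\,dt\le C_p<\infty$ for a constant depending only on $p$. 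Since $\|S\|_{L^2}=1$, this is exactly the left-hand inequality with $A_p=C_p^{-1/p}$.

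For the lower bound when $0<p<2$ I would interpolate. By the log-convexity of $q\mapsto\int_0^1|S|^q\,dt$ (Lyapunov's inequality), choosing $s\in(0,1)$ with $2=(1-s)p+4s$ gives
\begin{equation*}
1=\int_0^1|S|^2\,dt\le\left(\int_0^1|S|^p\,dt\right)^{1-s}\left(\int_0^1|S|^4\,dt\right)^{s}.
\end{equation*}
A direct expansion, using that $\int_0^1 r_ir_jr_kr_l\,dt$ vanishes unless the four indices pair up, gives $\int_0^1|S|^4\,dt\le 3\big(\int_0^1|S|^2\,dt\big)^2=3$; combining this with the previous display and solving for $\int_0^1|S|^p\,dt$ produces a strictly positive lower bound, which is the right-hand inequality with an explicit $B_p$. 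The main obstacle is not any single one of these estimates — each is short — but rather the structural facts that make the moment and exponential computations factor: that the functions $r_i=\sign\sin(2\pi 2^i\,\cdot)$ are genuinely mutually independent symmetric signs, and that this is unaffected by the convention $\sign 0=-1$, which changes the $r_i$ only on a set of measure zero. Once independence is in hand, the two regimes assemble into the claimed two-sided inequality with constants depending only on $p$.
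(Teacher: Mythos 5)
Your proof is correct, and there is nothing in the paper to compare it against: the paper quotes Khintchine's inequality as a known result (I.B.8 of Wojtaszczyk's \emph{Banach Spaces for Analysts}) and gives no proof, so what you have supplied is the missing standard argument. The skeleton is right: the middle equality is just orthonormality of the $r_i$; monotonicity of $q\mapsto\|\cdot\|_{L^q}$ on a probability space disposes of one inequality in each range of $p$; after normalizing $\|a\|_{l^2}=1$, the subgaussian bound $\int_0^1 e^{\lambda S}\,dt=\prod_i\cosh(\lambda a_i)\le e^{\lambda^2/2}$ combined with Chernoff and the layer-cake formula gives the nontrivial (left) inequality for $p\ge 2$; and the Lyapunov interpolation $1=\int_0^1|S|^2\,dt\le\bigl(\int_0^1|S|^p\,dt\bigr)^{1-s}\bigl(\int_0^1|S|^4\,dt\bigr)^{s}$ with $(1-s)p+4s=2$ and the pairing bound $\int_0^1|S|^4\,dt\le 3$ gives the nontrivial (right) inequality for $0<p<2$. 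All of these steps check out, including the choice $s=(2-p)/(4-p)\in(0,1)$.

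Two details you should make explicit in a full writeup. First, the coefficient spaces in this paper are complex ($l^p_N$ is defined as $\C^N$), and your exponential-moment step only makes sense for real-valued $S$; for complex scalars apply it to $\sum_i(\mathrm{Re}\,a_i)r_i$ and $\sum_i(\mathrm{Im}\,a_i)r_i$ separately and combine, at the cost of a harmless factor in $A_p$ (the fourth-moment/interpolation half works verbatim for complex scalars, since $\int_0^1 r_ir_jr_kr_l\,dt$ still vanishes unless the indices pair up). Second, the independence of the Rademacher functions, which you rightly isolate as the load-bearing structural fact, deserves its one-line verification rather than an assertion: for any signs $\epsilon_1,\dots,\epsilon_n\in\{\pm 1\}$, the set $\{t\in[0,1]:\,r_i(t)=\epsilon_i,\ 1\le i\le n\}$ is a union of dyadic intervals of total measure $2^{-n}$, which is exactly the product of the individual measures; the convention $\sign 0=-1$ only affects a finite set of points and hence no $L^p$ norm. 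Neither point is a gap in the idea; both are routine.
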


The Walsh system has a very special property: projection operators mapping bounded functions onto the span of the first $2^k$ Walsh functions 
are uniformly bounded with norm $1$. 
Thus, when we combine this projection property with Khintchine's Inequality, we obtain a 
statement for finite sets. 
While the following material may be found in~\cite{Fin49},  
we include the proof of Theorem~\ref{walsh-proj} for the reader's convenience and to emphasize the unique properties of the 
Walsh system. 
Theorems~\ref{subsetwalsh} to~\ref{Cunder} are easy to understand once one has read through the proof below. 
Lines~(\ref{mmm})-(\ref{mmmm}) of the proof show that the dyadic Lebesque constants are $1$ for the Walsh system, 
which is a fundamentally different behavior from the $\log N$ behavior that occurs in the Fourier case. 
\begin{theorem}\label{walsh-proj}
Let $P_{2^n}$ denote the projection onto $\{w_1,....,w_{2^n}\}$. 
Then 
\begin{equation}
\|P_{2^n}\|_{\calC([0,1])\rightarrow L^\infty([0,1])}=1
\end{equation}
for all $n\in \N$,  
and if $f\in \calC([0,1])$, 
then for all $x\in [0,1]$
\begin{equation*}
\lim_{n\rightarrow\infty}(P_{2^n}f)(x)=f(x).
\end{equation*}
\end{theorem}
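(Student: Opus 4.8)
The plan is to identify the operator $P_{2^n}$ concretely as the operator that replaces a function by its average over each dyadic interval of length $2^{-n}$, and then to read off both assertions directly from this description.

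First I would establish the structural fact that $\span\{w_1,\dots,w_{2^n}\}$ equals the space $V_n$ of functions that are constant on each dyadic interval $[j2^{-n},(j+1)2^{-n})$, $j=0,\dots,2^n-1$. Indeed, by the recursion $w_{2^k+m}=r_k\,w_m$ together with the binary expansion of $k-1$, each $w_k$ with $1\le k\le 2^n$ is a product of Rademacher functions $r_i$ with $i\le n-1$; and each such $r_i$ is constant on the dyadic intervals of length $2^{-n}$ (its sign changes only at points $l\,2^{-(i+1)}$ with $i+1\le n$), so $r_i\in V_n$. Since $V_n$ is closed under pointwise products, all of $w_1,\dots,w_{2^n}$ lie in $V_n$. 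These $2^n$ functions are orthonormal, hence linearly independent, and $\dim V_n=2^n$, so they form an orthonormal basis of $V_n$ and $P_{2^n}$ is the orthogonal projection onto $V_n$.

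Next I would make this projection explicit. The indicator functions $\mathbf{1}_{[j2^{-n},(j+1)2^{-n})}$ form an orthogonal basis of $V_n$ with squared $L^2$-norm $2^{-n}$, so for $f\in\calC([0,1])$,
\begin{equation*}
(P_{2^n}f)(x)=2^n\int_{I_j}f(t)\,dt,\qquad x\in I_j,
\end{equation*}
where $I_j$ is the dyadic interval of length $2^{-n}$ containing $x$; that is, $P_{2^n}f$ is the local average of $f$. The norm claim is then immediate: on each $I_j$ one has $|2^n\int_{I_j}f|\le 2^n\int_{I_j}|f|\le\|f\|_{L^\infty}$, giving $\|P_{2^n}\|\le 1$, while $P_{2^n}w_1=w_1$ with $\|w_1\|_{L^\infty}=1$ forces equality. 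This is the computation to which the surrounding text refers when it says the dyadic Lebesgue constants equal $1$.

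Finally, for the pointwise limit I would use uniform continuity of $f$ on $[0,1]$: given $x$ and $\epsilon>0$, choose $\delta$ with $|f(t)-f(x)|<\epsilon$ whenever $|t-x|<\delta$, and take $n$ so large that $2^{-n}<\delta$. Then the interval $I_j\ni x$ has diameter below $\delta$ and, since $f(x)=2^n\int_{I_j}f(x)\,dt$,
\begin{equation*}
\left|(P_{2^n}f)(x)-f(x)\right|=\left|2^n\int_{I_j}\bigl(f(t)-f(x)\bigr)\,dt\right|\le\epsilon,
\end{equation*}
so $(P_{2^n}f)(x)\to f(x)$. I expect the only real obstacle to be the first step, the identification of $\span\{w_1,\dots,w_{2^n}\}$ with $V_n$; once the averaging representation is in hand, the norm bound follows from averaging being a sup-norm contraction that fixes constants, and the convergence is just the differentiation of a continuous function over shrinking intervals.
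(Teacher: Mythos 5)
Your proof is correct, and it reaches the decisive fact --- that $P_{2^n}$ acts by averaging over the dyadic intervals of length $2^{-n}$ --- by a genuinely different route than the paper. The paper computes the reproducing kernel directly: it establishes the product identity $\sum_{k=1}^{2^n}w_k(x)w_k(t)=\prod(1+r_k(x)r_k(t))$ (Equality~(\ref{987})), observes that this product equals $2^n$ when $x$ and $t$ lie in the same dyadic interval of length $2^{-n}$ and vanishes otherwise, and so obtains the averaging formula; pointwise convergence is then deduced by writing $(P_{2^n}f)(x)$ as a difference quotient of an antiderivative of $f$. You instead identify $\span\{w_1,\dots,w_{2^n}\}$ with the space $V_n$ of dyadic step functions by a dimension count (each $w_k$ with $k\le 2^n$ is a product of $r_0,\dots,r_{n-1}$, which lie in $V_n$ and $V_n$ is closed under products; orthonormality gives independence; $\dim V_n=2^n$), after which the averaging representation is immediate from the orthogonal basis of indicators, and you close with uniform continuity. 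Both arguments are complete, and the endgames are interchangeable. What the paper's route buys is the identity itself: Equality~(\ref{987}) is not merely a stepping stone here but the central mechanism in the proof of Lemma~\ref{mainlemmawalsh}, where linear combinations of Walsh functions are expressed as products of factors $(1+w^{(l)})$, and the paper states explicitly that the proof of Theorem~\ref{walsh-proj} is included so that the reader absorbs exactly these properties. Your dimension count is arguably more economical for this single theorem, but it produces no such reusable identity. (A minor point in your favor: your Rademacher indices $r_0,\dots,r_{n-1}$ are the correct ones for the paper's convention $r_k(t)=\sign\sin(2\pi 2^k t)$; the paper's own product display runs the index from $1$ to $n$, a harmless off-by-one.)
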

\begin{proof}
The projection of $f\in L^2([0,1])$ onto $w_n$ is 
\begin{equation*}
c_n=\int_0^1 w_n(x)f(x)dx.
\end{equation*}
We consider the projection onto $\{w_k\}_{k=1}^{n}$ at the point $x$ and denote this $s_n(x,f)$: 
\begin{equation*}
s_n(x,f)= \sum_{k=1}^{n} c_k w_k(x).
\end{equation*}
Equivalently, 
 \begin{equation*}
s_n(x,f)=\int_0^1 \sum_{k=1}^{n}f(t)  w_k(t) w_k(x)dt.
\end{equation*}
Then 
 \begin{equation}
s_{2^n}(x,f)=\int_0^1 \sum_{k=1}^{2^n}f(t)  w_k(t)  w_k(x)dt.\label{9999}
\end{equation}
We are interested in the sum
\begin{equation*}
\sum_{k=1}^{2^n} w_k(t)  w_k(x);
\end{equation*}
however, this is just the sum of all possible products of 
$\{r_0(x)r_0(t),...,r_n(x)r_n(t)\}$, 
and so 
\begin{equation}
\sum_{k=1}^{2^n} w_k(t)  w_k(x)=\prod_{k=1}^{n}(1+r_k(x)r_k(t)).\label{987}
\end{equation}
If $x$ and $t$ are in the same dyadic interval of length $2^{-n}$, 
then $r_k(x)r_k(t)=1$ for $k=1,...,n$. 
But, if   there exists $k$ less than or equal $n$ such that 
$x$ and $t$ are not in the same dyadic interval of length $2^{-k}$, 
then $r_k(x)r_k(t)=-1$. This is due to the fact that one term must equal $1$ and the other must equal $-1$. 
In this case the product~(\ref{987}) must equal zero. 
Defining 
\begin{equation*}
D_{2^n}(x,t)=\sum_{k=1}^{2^n} w_k(t)  w_k(x).
\end{equation*}
We then have 
\begin{equation}
D_{2^n}(x,t) =
\left\{ \begin{array}{ll}
\hspace{-2mm}2^n&\hspace{-2mm}x,t\textnormal{ in the same dyadic interval of length }2^{-n}\\
\hspace{-2mm}0&\hspace{-2mm}x,t\textnormal{ not in the same dyadic interval of length }2^{-n}.
\end{array}\right. \label{333}
\end{equation}
Then
\begin{equation*}
(P_{2^n}f)(x)=\int_0^1f(t)D_{2^n}(x,t)dt.
\end{equation*}
If $x\in I_m:=[\frac{m-1}{2^n},\frac{m}{2^n}]$, then using equation~(\ref{333}), 
\begin{eqnarray}
\|P_{2^n}\|_{\calC([0,1])\rightarrow \calC([0,1])}
& =&\sup_{f\in \calC([0,1]),\|f\|_{L^\infty([0,1])}=1} \int_0^1f(t)D_{2^n}(x,t)dt\label{mmm}\\
&=& \int_{I_m}D_{2^n}(x,t)dt\nonumber\\
&=& 1.\label{mmmm}
\end{eqnarray}
This proves the first claim. 

For $x\in[0,1]$ define $\alpha_n(x)$ and $\beta_n(x)$  by taking them to satisfy the following statement for an appropriate  integer $m$:
\begin{equation*}
\alpha_n(x)=m2^{-n}\leq x<(m+1)2^{-n}=\beta_n(x).
\end{equation*}
Now, returning to equation~(\ref{9999}) we have 
\begin{eqnarray*}
s_{2^n}(x,f)&=&\int_0^1f(t)D_{2^n}(x,t)dt\\
&=& 2^n \int_{\alpha_n(x)}^{\beta_n(x)}f(t)dt\\
&=& \frac{F(\beta_n(x))-F(\alpha_n(x))}{\beta_n(x)-\alpha_n(x)},
\end{eqnarray*}
where $F(x)$ is an integral of $f(x)$. 
Since $f$ is assumed to be continuous, we have 
\begin{equation*}
\lim_{n\rightarrow\infty}(P_{2^n}f)(x)=f(x).
\end{equation*}
\end{proof}

We include here the following theorem, which is proved in~\cite{BF10}, in order to contrast the projection behavior of the Fourier and 
Walsh bases. 
The additional redundancy of a factor of $\lambda$ frequencies in the compensation set is necessary to obtain the theorem below. 
For a given extension norm, as the size of the information-bearing set increases, not only does the compensation set have to increase 
proportionally, but the set of extra frequencies included beyond the highest frequency must also grow proportionally. 
We will return to this when we discuss the behavior of the Walsh system, where one can project sharply. 

\begin{theorem}[\cite{BF10}]\label{finprojection}
Suppose that $I_N$ is a subset of $\{-N,...,N\}$ and that for every $a\in l^2(\Z)$ supported on 
$I_N$ the PAPR reduction problem is solvable with an extension sequence supported on $\Z\backslash I_N$ 
and with extension bound $C_{Ex}$. 
Assume $\lambda>1$ and that $\lambda N$ is an integer. 
Then the PAPR reduction problem is also solvable with an 
extension sequence supported on $\{-\lambda N,...,\lambda N\}\backslash I_N$ with extension constant $\frac{2\lambda}{\lambda-1}C_{Ex}$. 
\end{theorem}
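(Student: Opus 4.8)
The plan is to start from the compensation furnished by the hypothesis on the full index set $\Z\backslash I_N$ and then annihilate all frequencies above $\lambda N$ by convolving against a kernel that acts as the identity on the low frequencies carrying the information. Given $a\in l^2(\Z)$ supported on $I_N$, write $f(t)=\sum_{n\in I_N}a_ne^{int}$. By assumption there is a function $g$ with Fourier support in $\Z\backslash I_N$ such that $\|f+g\|_{L^\infty(\T)}\le C_{Ex}\|a\|_{l^2(\Z)}$. The function $f+g$ already has small peak value, but its spectrum may occupy all of $\Z$; I would repair this by applying a convolution projection $P_K$ for which $P_K(f+g)$ is spectrally supported in $\{-\lambda N,\dots,\lambda N\}$ while $f$ is left unchanged.

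Second, I would build $K$ as a flat-topped kernel (a de la Vall\'ee Poussin kernel), exactly in the spirit of the kernel used in the proof of Theorem~\ref{solvelambda}. Using the Fej\'er kernels $F_n$ from~(\ref{fejpos1}) and the hypothesis that $\lambda N\in\Z$, set
\[ K=\frac{\lambda F_{\lambda N}-F_N}{\lambda-1}. \]
Computing Fourier coefficients from $\widehat{F_n}(j)=\max\{0,\,1-|j|/n\}$ gives $\widehat K(j)=1$ for $|j|\le N$, a linear decay of $\widehat K(j)$ from $1$ to $0$ as $|j|$ runs from $N$ to $\lambda N$, and $\widehat K(j)=0$ for $|j|>\lambda N$.

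Third, I would read off the three properties that make this projection do the job. Since $\widehat K\equiv 1$ on $\{-N,\dots,N\}\supseteq I_N$ and $f$ is supported on $I_N$, convolution fixes the information part, $P_Kf=f$. Since $\widehat{P_Kg}(j)=\widehat K(j)\widehat g(j)$ vanishes both for $j\in I_N$ (as $\widehat g$ does) and for $|j|>\lambda N$ (as $\widehat K$ does), the new compensation $P_Kg$ is supported precisely on $\{-\lambda N,\dots,\lambda N\}\backslash I_N$. Hence $P_K(f+g)=f+P_Kg$ is an admissible tone-reserved signal for the enlarged compensation set. Finally, the operator norm of $P_K$ on $L^\infty(\T)$ is $\|K\|_{L^1(\T)}$, so
\[ \|f+P_Kg\|_{L^\infty(\T)}\le\|K\|_{L^1(\T)}\,\|f+g\|_{L^\infty(\T)}\le\|K\|_{L^1(\T)}\,C_{Ex}\|a\|_{l^2(\Z)}. \]
The $l^2$-bound on the new compensation is automatic, since $|\widehat K|\le 1$ forces $\|P_Kg\|_{l^2(\Z)}\le\|g\|_{l^2(\Z)}\le C_{Ex}\|a\|_{l^2(\Z)}$.

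The main obstacle is the $L^1$-bound on $K$. Because $K$ is a combination of Dirichlet partial sums it is not nonnegative, so one cannot simply read $\|K\|_{L^1(\T)}$ off as $\widehat K(0)=1$; this is the same difficulty confronted in the estimate leading to~(\ref{121234}). The resolution is that each Fej\'er kernel is nonnegative with $\|F_n\|_{L^1(\T)}=1$, whence the triangle inequality yields
\[ \|K\|_{L^1(\T)}\le\frac{\lambda\|F_{\lambda N}\|_{L^1(\T)}+\|F_N\|_{L^1(\T)}}{\lambda-1}=\frac{\lambda+1}{\lambda-1}\le\frac{2\lambda}{\lambda-1}, \]
the last inequality holding because $\lambda>1$. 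This is exactly the extension constant claimed, completing the argument.
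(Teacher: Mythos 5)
Your proof is correct and follows essentially the same route as the paper: the paper defers this theorem to \cite{BF10}, but the identical technique---projecting the compensated signal by convolution with a de la Vall\'ee Poussin kernel built from two Fej\'er-type positive kernels and bounding its $L^1$ norm by $\frac{2\lambda}{\lambda-1}$ via their positivity---appears in the proof of Theorem~\ref{solvelambda}, lines~(\ref{56})--(\ref{121234}). Your bound $\frac{\lambda+1}{\lambda-1}$ is in fact slightly sharper than the stated constant $\frac{2\lambda}{\lambda-1}$, so the claim follows a fortiori.
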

The next theorem addresses the same question for the Walsh system. 
The significant point here is that with Walsh functions one may work with only the dyadic set that the information-bearing 
coefficients are contained in. 
There is nothing gained or lost by using or not using any Walsh functions beyond this dyadic set. 
\begin{theorem}\label{subsetwalsh}
Suppose that $I_{N}$ is a subset of $\{1,2,3,....,N\}$ and 
that for every function $f$ of the form 
\begin{equation*}
f(t)=\sum_{k\in I_{2^n}} a_kw_k(t),
\end{equation*}
there exists a compensation function with  coefficient vector $a$  supported on $\N\backslash I_N$, such that the combined 
signal has $\|\cdot\|_{L^\infty(\T)}$-norm at most $C_{Ex}\|a\|_{l^2(\N)}$. 
Let $n$ be the smallest integer such that $N\leq 2^n$. 
Then there exists a compensation function with coefficients supported on $\{k\}_{k=1}^{2^n}\backslash I_N$, 
such that the  $\|\cdot\|_{L^\infty(\T)}$-norm of the combined signal is still at most $C_{Ex}\|a\|_{l^2(\N)}$.
\end{theorem}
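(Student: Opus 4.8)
The plan is to apply the sharp dyadic projection $P_{2^n}$ of Theorem~\ref{walsh-proj} to truncate the given (possibly infinitely supported) compensation signal down to the finite dyadic block $\{w_1,\dots,w_{2^n}\}$, exploiting the fact that $P_{2^n}$ does not increase the $L^\infty$-norm. Write the information signal as $f(t)=\sum_{k\in I_N}a_kw_k(t)$; by hypothesis there is a compensation $g(t)=\sum_{k\in\N\backslash I_N}b_kw_k(t)$ with $\|f+g\|_{L^\infty(\T)}\le C_{Ex}\|a\|_{l^2(\N)}$. The goal is to produce a compensation supported on the finite set $\{1,\dots,2^n\}\backslash I_N$ achieving the same bound, leaving the information coefficients $a$ untouched.

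First I would record the support bookkeeping. Since $I_N\subset\{1,\dots,N\}\subset\{1,\dots,2^n\}$, the signal $f$ lies in the range of $P_{2^n}$, so $P_{2^n}f=f$. Applying $P_{2^n}$ to $g$ discards every Walsh coefficient of index exceeding $2^n$, so $\tilde g:=P_{2^n}g$ is supported on $(\N\backslash I_N)\cap\{1,\dots,2^n\}=\{1,\dots,2^n\}\backslash I_N$, exactly the desired index set. Consequently $P_{2^n}(f+g)=f+\tilde g$ has the required support structure with the same information part $f$.

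It then remains to check the norm. Using the kernel $D_{2^n}(x,t)$ from the proof of Theorem~\ref{walsh-proj}, the operator $P_{2^n}$ acts as averaging over the dyadic interval of length $2^{-n}$ containing $x$, namely $(P_{2^n}h)(x)=2^n\int_{I_m}h(t)\,dt$ for $x\in I_m$. This immediately gives $|(P_{2^n}h)(x)|\le 2^n\cdot 2^{-n}\|h\|_{L^\infty(\T)}=\|h\|_{L^\infty(\T)}$, so that $\|f+\tilde g\|_{L^\infty(\T)}=\|P_{2^n}(f+g)\|_{L^\infty(\T)}\le\|f+g\|_{L^\infty(\T)}\le C_{Ex}\|a\|_{l^2(\N)}$, which completes the argument. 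The one point requiring care---and the main (if minor) obstacle---is that Theorem~\ref{walsh-proj} states its norm bound over $\calC([0,1])$, whereas the compensated signal $f+g$ need not be continuous; this is resolved by the averaging representation above, which shows $P_{2^n}$ is conditional expectation onto the dyadic $\sigma$-algebra and hence a contraction on all of $L^\infty(\T)$, continuity notwithstanding.
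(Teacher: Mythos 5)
Your proof is correct and follows exactly the paper's own argument: project the combined signal $f+g$ onto $\span\{w_1,\dots,w_{2^n}\}$ via $P_{2^n}$ from Theorem~\ref{walsh-proj}, observe the support bookkeeping, and use that the projection does not increase the $L^\infty$-norm. Your extra remark that $P_{2^n}$ is conditional expectation onto the dyadic $\sigma$-algebra, and hence an $L^\infty$-contraction even for non-continuous signals, fills in a detail the paper leaves implicit, but the route is the same.
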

\begin{proof}
By Theorem~\ref{walsh-proj}, we may simply project the original combined function onto the span of $\{w_k\}_{k=1}^{2^n}$ 
and maintain the same norm. 
\end{proof}
Now we may consider a special case of Theorem~\ref{subsetwalsh}, for which we know that the PAPR reduction problem is solvable, 
namely when $I_N$ is the set of dyadic integers. 
\begin{theorem}\label{diadsubsetwalsh}
Let $B_1$ be the constant given in Khintchine's Inequality (Theorem~\ref{Khin}). 
Then for any function of the form 
\begin{equation*}
\sum_{k=1}^{n} a_{2^k}w_{2^k}(t),
\end{equation*}
there exists a vector $b\in l^2_{2^n}$ supported on $\{k\}_{k=1}^{2^n}\backslash \{2^k\}_{k=1}^{n}$ and with norm 
$\|b\|_{l^2_{2^n}}\leq B_1 \|a\|_{l^2_{2^n}}$ such that 
\begin{equation*}
\left\| \sum_{k=1}^{n} a_{2^k}w_{2^k}+   \sum_{ \{k\}_{k=1}^{2^n}\backslash \{2^k\}_{k=1}^{n} } b_{k}w_{k}\right\|_{L^\infty([0,1])}\leq B_1  \|a\|_{l^2_{2^n}}.
\end{equation*}
\end{theorem}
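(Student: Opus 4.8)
The plan is to reduce the statement to Khintchine's Inequality (Theorem~\ref{Khin}) together with the solvability/norm-equivalence correspondence (Theorem~\ref{equivalence}) and the sharp dyadic projection property (Theorem~\ref{subsetwalsh}). The one genuinely new observation needed is that the dyadic Walsh functions $w_{2^k}$ behave exactly like a Rademacher sequence.

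First I would identify the functions $w_{2^k}$. Applying the recursion $w_{2^k+m}=r_k w_m$ with index $k-1$ and $m=2^{k-1}$ gives $w_{2^k}=r_{k-1}w_{2^{k-1}}$, so by induction $w_{2^k}=\prod_{j=0}^{k-1}r_j$. Set $v_k:=w_{2^k}$. Although each $v_k$ is a product of Rademacher functions rather than a single one, the map sending $(r_0,\dots,r_{n-1})$ to $(v_1,\dots,v_n)$ is linear and invertible over $\mathbb{F}_2$ (it is triangular, since $v_k$ involves exactly $r_0,\dots,r_{k-1}$). Hence this map is a bijection of $\{\pm1\}^n$, so $(v_1,\dots,v_n)$ has the same joint distribution as $n$ independent Rademacher functions. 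Consequently Khintchine's Inequality applies verbatim to $f=\sum_{k=1}^n a_{2^k}v_k$, yielding $\|a\|_{l^2_{2^n}}=\|f\|_{L^2([0,1])}\le B_1\|f\|_{L^1([0,1])}$.

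Next I would convert this $L^1$--$L^2$ bound into solvability. Viewing $\{w_k\}_{k\in\N}$ as an orthonormal basis of $L^2([0,1])$ and taking $K=\{2^k\}_{k=1}^\infty$, the inequality just obtained is exactly the norm equivalence~(\ref{ne}) with constant $C_{\Ex}=B_1$, so Theorem~\ref{equivalence} gives that the PAPR problem is solvable for $(K,\{w_k\})$ with extension norm $B_1$. By Definition~\ref{solvable} this produces, for our finitely supported $a$, a compensation sequence supported on $\N\setminus K$ whose $l^2$ norm is at most $B_1\|a\|_{l^2_{2^n}}$ and for which the combined signal $r$ satisfies $\|r\|_{L^\infty}\le B_1\|a\|_{l^2_{2^n}}$ (the $l^2$ bound follows since the compensation coefficients are among those of $r$ and $\|r\|_{L^2}\le\|r\|_{L^\infty}$ on the probability space $[0,1]$). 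Finally, to confine the compensation to the finite block $\{1,\dots,2^n\}\setminus\{2^k\}_{k=1}^n$, I would apply the projection $P_{2^n}$ of Theorem~\ref{subsetwalsh}: since $P_{2^n}$ is local averaging it has $L^\infty\to L^\infty$ norm $1$, and being an orthogonal projection it does not increase the $l^2$ norm of the coefficient vector; it fixes $w_1,\dots,w_{2^n}$ (hence preserves the information coefficients $a_{2^k}$, $k\le n$) and annihilates $w_k$ for $k>2^n$. Thus $P_{2^n}r$ has the required form, with both the $L^\infty$ bound and the $\|b\|_{l^2_{2^n}}$ bound preserved.

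The main obstacle is the first step: recognizing and justifying that $\{w_{2^k}\}$ is, distributionally, an independent Rademacher system, despite each $w_{2^k}$ being a product $\prod_{j<k}r_j$ rather than a single Rademacher. Once that is in place, Khintchine supplies the constant $B_1$ and the remaining steps are bookkeeping on the already-established Theorems~\ref{equivalence} and~\ref{subsetwalsh}. A secondary technical point is the passage, if one insists on applying Theorem~\ref{equivalence} to the full space $X$, from finite sums to general $f\in X$; but since the input $a$ is finitely supported, one may equally work within the finite-dimensional span $\{w_1,\dots,w_{2^n}\}$ and use a finite-dimensional Hahn--Banach extension, avoiding any limiting argument.
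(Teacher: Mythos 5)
Your proposal is correct and follows essentially the same route as the paper's own proof: Khintchine's Inequality supplies the $L^1$--$L^2$ norm equivalence on the span of $\{w_{2^k}\}$, Theorem~\ref{equivalence} converts this into solvability with a compensation sequence supported on $\N\setminus\{2^k\}_{k=1}^{n}$, and the dyadic projection of Theorem~\ref{subsetwalsh} then localizes the compensation to $\{1,\dots,2^n\}$ without increasing either norm. Your explicit justification that $w_{2^k}=\prod_{j=0}^{k-1}r_j$ has the same joint distribution as an independent Rademacher sequence (via the triangular, hence bijective, change of variables on $\{\pm1\}^n$) is a detail the paper glosses over when it applies Khintchine's Inequality directly to the $w_{2^k}$, so it strengthens rather than alters the argument.
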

\begin{proof}
We may take the subset $K$ in Theorem~\ref{equivalence} to be $\{2^k\}_{k=1}^{n}$ 
(Theorem~\ref{equivalence} of course holds when $L^p(\T)$ is replaced by $L^p([0,1])$.) 
Khintchine's Inequality (Inequality~\ref{Khin}) gives the norm equivalence, and thus there exists a sequence 
$b\in \N\backslash \{2^k\}_{k=1}^n$, such that 
\begin{equation}
\left\| \sum_{k=1}^n a_{2^k}w_{2^k}+   \sum_{ \N\backslash \{2^k\}_{k=1}^n } b_{k}w_{k}\right\|_{L^\infty([0,1])}\leq B_1  \|a\|_{l^2_{2^n}}.\label{together}
\end{equation}
Applying Theorem~\ref{subsetwalsh} to the function in~(\ref{together}) proves the theorem. 
\end{proof}

We define an Optimal subset size for the Walsh system. 
\begin{definition}[Optimal subset size-Walsh]
\begin{eqnarray*}
\E_N(C_\Ex,W)&=&\max\{ |I_N|;\; I_N\subset\{1,...,N\},\textnormal{  such that }\\
&&\hspace{1cm}\textnormal{ PAPR is solvable for } I_N \textnormal{with constant }C_\Ex\},
\end{eqnarray*}
where here  $W$ refers to the Walsh system. 
\end{definition}

The following result shows that for a given extension constant, the efficiency of the optimal subset 
does not increase as the dimension increases. 
This means we have strict monotone convergence on dyadic subsets of $\N$. 
The result is stronger than an asymptotic statement: it holds for all dyadic Walsh subsets. 
\begin{theorem}
For the Walsh system, the optimal subsets satisfy the following inequality
\begin{equation*}
2\E_{2^m}(C_\Ex,W)\geq \E_{2^{m+1}}(C_\Ex,W)
\end{equation*}
for all constants $C_{\Ex}$. 
\end{theorem}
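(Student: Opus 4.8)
The plan is to translate solvability into the dimension-independent norm equivalence of Theorem~\ref{equivalence}, and then to exploit the fact that the top half of the Walsh block $\{w_1,\dots,w_{2^{m+1}}\}$ is obtained from the bottom half by multiplication by a single Rademacher function, which is an isometry on every $L^p([0,1])$.

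First I would record the characterization that drives everything. Since $\{w_k\}_{k=1}^\infty$ is an orthonormal basis for $L^2([0,1])$, Theorem~\ref{equivalence} (which, as noted in the proof of Theorem~\ref{diadsubsetwalsh}, holds with $L^p(\T)$ replaced by $L^p([0,1])$) says that for a subset $I$ the PAPR problem is solvable with constant $C_{\Ex}$, allowing a compensation sequence on all of $\N\setminus I$, if and only if $\|f\|_{L^2([0,1])}\leq C_{\Ex}\|f\|_{L^1([0,1])}$ for every $f=\sum_{k\in I}a_kw_k$. When $I\subset\{1,\dots,2^n\}$, Theorem~\ref{subsetwalsh} lets me always take the compensation inside $\{1,\dots,2^n\}\setminus I$ without changing the constant, while conversely a finite compensation extends by zeros to an infinite one; so for such $I$, solvability in dimension $2^n$ is exactly this norm equivalence. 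The crucial observation is that $\|f\|_{L^1([0,1])}$ and $\|f\|_{L^2([0,1])}$ depend only on the functions $\{w_k:k\in I\}$ and not on the ambient dimension, whence
\[
\E_{2^n}(C_{\Ex},W)=\max\Big\{|I|:\,I\subset\{1,\dots,2^n\},\ \|f\|_{L^2}\leq C_{\Ex}\|f\|_{L^1}\ \text{for all }f=\sum_{k\in I}a_kw_k\Big\}.
\]

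Next I would take an optimal set $I^\ast\subset\{1,\dots,2^{m+1}\}$ with $|I^\ast|=\E_{2^{m+1}}(C_{\Ex},W)$ and split it into its low part $A=I^\ast\cap\{1,\dots,2^m\}$ and high part $B=I^\ast\cap\{2^m+1,\dots,2^{m+1}\}$. Any $f$ supported on $A$ is also supported on $I^\ast$, so $A$ inherits the norm equivalence and, since $A\subset\{1,\dots,2^m\}$, we get $|A|\leq \E_{2^m}(C_{\Ex},W)$. For the high part I would use the identity $w_{2^m+j}=r_m w_j$ for $j=1,\dots,2^m$ and set $J=\{j:\,2^m+j\in B\}\subset\{1,\dots,2^m\}$, so $|J|=|B|$. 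Given any $g=\sum_{j\in J}b_jw_j$, the function $r_m g=\sum_{j\in J}b_jw_{2^m+j}$ is supported on $B\subset I^\ast$; since $|r_m(x)|=1$ pointwise we have $|r_m g|=|g|$ everywhere, so $r_m g$ and $g$ have identical $L^1$ and $L^2$ norms. Applying the norm equivalence to $r_m g$ then gives $\|g\|_{L^2}\leq C_{\Ex}\|g\|_{L^1}$, so $J$ satisfies the property in dimension $2^m$ and $|B|=|J|\leq \E_{2^m}(C_{\Ex},W)$. Adding the two estimates finishes the proof, since $\E_{2^{m+1}}(C_{\Ex},W)=|A|+|B|\leq 2\E_{2^m}(C_{\Ex},W)$.

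The one step that needs the most care is the first paragraph: pinning down that, for a fixed finite dyadic block, solvability is genuinely equivalent to the norm equivalence in \emph{both} directions, and that this condition is blind to the ambient dimension. Once that reduction is in hand, the Rademacher isometry $r_m$ does all the remaining work and the two-to-one splitting is immediate.
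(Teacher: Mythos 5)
Your proof is correct, and while it rests on the same splitting of the optimal set into its low and high dyadic halves, the machinery is genuinely different from the paper's. You convert solvability into the dimension-blind norm equivalence $\|f\|_{L^2([0,1])}\leq C_{\Ex}\|f\|_{L^1([0,1])}$ on the span of $\{w_k:k\in I\}$, using both directions of Theorem~\ref{equivalence} (the Hahn--Banach direction included) together with Theorem~\ref{subsetwalsh} to pass between compensation on $\N\setminus I$ and compensation inside the dyadic block; after this reduction, the low half $A$ is handled by trivial subset monotonicity of the norm equivalence, and the high half $B$ by the fact that multiplication by $r_m$ is a pointwise isometry on every $L^p([0,1])$. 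The paper never leaves the solvability formulation: it maps the extension functions in dimension $2^{m+1}$ to valid extension functions in dimension $2^m$ by norm-one operators --- the projection $P_{2^m}$ of Theorem~\ref{walsh-proj} for the low half, and for the high half an operator $Q^{(2)}$ obtained by factoring the block projection $P_{2^{m+1}}-P_{2^m}$ as a unimodular Walsh multiplier composed with a map into $\span\{w_1,\dots,w_{2^m}\}$, whose Lebesgue constant is again $1$. Your route buys brevity: both halves become one-line arguments once the reduction is in place, and you correctly single out that reduction (constants preserved in both directions, insensitivity to the ambient dimension) as the delicate step. What it costs is constructiveness: your argument passes through the nonconstructive Hahn--Banach direction of Theorem~\ref{equivalence}, whereas the paper explicitly manufactures the compensations in dimension $2^m$ from those in dimension $2^{m+1}$ using only the projection property. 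A minor point in your favor: under the ordering defined in the paper, the high block satisfies $w_{2^m+j}=r_mw_j$, which is exactly the identity you use; the multiplier $w_{2^m}$ appearing in the paper's proof is the corresponding identity for the Paley ordering, so your choice of $r_m$ is the cleaner one for the system as the paper defines it.
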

\begin{proof}
We will call a subset of $\{0,...,2^{m}\}$ an optimal subset for the constant $C_{\Ex}$ if  
the PAPR reduction problem is solvable for the subset with constant $C_{\Ex}$ and there is no 
other subset of greater cardinality for which this holds. 
Let $I^*_{2^{m+1}}$ denote  an optimal subset of $\{0,...,2^{m+1}\}$ and 
$I^*_{2^{m}}$ an optimal subset of $\{0,...,2^{m}\}$. 
First define 
\begin{equation*}
I^{(1)}= \{0,1,...,2^m\}\cap I^*_{2^{m+1}}. 
\end{equation*} 
For $c\in l^2(I^{(1)}) $, let $f_{c,2^{m+1}}$ denote the extension function in $\span\{ w_0,...,w_{2^{m+1}}\}$ 
that satisfies 
\begin{equation*}
\|   f_{c,2^{m+1}}\|_{ L^\infty([0,1]) }\leq C_{\Ex}\|c\|_{l^2(I^{(1)})}. 
\end{equation*}
By Theorem~\ref{walsh-proj} we also have 
\begin{equation*}
\| P_{2^m}  f_{c,2^{m+1}}\|_{ L^\infty([0,1]) }\leq\|   f_{c,2^{m+1}}\|_{ L^\infty([0,1]) }\leq C_{\Ex}\|c\|_{l^2(I^{(1)})}. 
\end{equation*}
So, if we define $ f_{c,2^{m}}= P_{2^m}  f_{c,2^{m+1}}$, then $ f_{c,2^{m}}$ is also a solution for 
$c\in l^2(I^{(1)}) $ with constant $C_{\Ex}$. 
Now, since we assumed that $I^*_{2^{m}}$ is an optimal subset, we must have 
$|I^{(1)}|\leq |I^*_{2^{m}}|$. 

Now we define $I^{(2)}$ by 
\begin{equation*}
I^{(2)}=\{2^m,...,2^{m+1}\}\cap I^*_{2^{m+1}}\}.
\end{equation*}
We define $P^{(2)}$ and $Q^{(2)}$ by 
\begin{eqnarray*}
(P^{(2)}f)(t)&=& (P_{2^{m+1}}f-P_{2^m})(t)\\
&=& \sum_{k=2^m}^{2^{m+1}} c_k(f)w_k(t)\\
&=& w_{2^m} \sum_{k=2^m}^{2^{m+1}} c_k(f)w_{k-2^m}(t)\\
&=& w_{2^m}(t)(Q^{(2)}f)(t).
\end{eqnarray*}
Note that $Q^{(2)}$ maps into $\span\{w_0,...,w_{2^m}\}$. 
Since $|w_{2^m}(t)|=1$, we have 
\begin{equation*}
\|P^{(2)}f\|_{ L^\infty([0,1]) }=\|Q^{(2)}f\|_{ L^\infty([0,1]) }.
\end{equation*} 
We can now do the same calculation for the Lebesque constants for $Q^{(2)}$ as was done in Theorem~\ref{walsh-proj} 
and determine that $\|Q^{(2)}\|=1$. 
Thus, 
\begin{equation*}
\|P^{(2)}f\|_{ L^\infty([0,1]) }= \|Q^{(2)}f\|_{ L^\infty([0,1]) }\leq  \|f\|_{ L^\infty([0,1]) }.
\end{equation*}
Let $c\in l^2(I^{(2)})$, and suppose that 
 $f^{(2)}_{c,2^{m+1}}$ is the extension function in $\span\{ w_0,...,w_{2^{m+1}}\}$ 
that satisfies 
\begin{equation*}
\|   f^{(2)}_{c,2^{m+1}}\|_{ L^\infty([0,1]) }\leq C_{\Ex}\|c\|_{l^2(I^{(2)})}. 
\end{equation*}
Then 
\begin{equation*}
\| Q^{(2)} f^{(2)}_{c,2^{m+1}}\|_{ L^\infty([0,1]) }\leq \|   f^{(2)}_{c,2^{m+1}}\|_{ L^\infty([0,1]) }\leq C_{\Ex}\|c\|_{l^2(I^{(2)})}.
\end{equation*}
Setting 
\begin{equation*}
\underline{I}^{(2)}=\{j\in\{0,...,2^m\};j+2^m\in I^{(2)}\},
\end{equation*}
since $Q^{(2)}$ maps onto $\span\{ 0,...,2^m\}$, 
we have shown solvability with the same constant on $\underline{I}^{(2)}$, 
so that  $|\underline{I}^{(2)}|\leq |I^*_{2^m}|$. 
Finally, 
\begin{eqnarray*}
|I^*_{2^{m+1}}| &=& |I^{(1)}\cup I^{(2)}|\\
&=&  |I^{(1)}|+| I^{(2)}|\\
&=&  |I^{(1)}|+| \underline{I}^{(2)}|\\
&\leq & 2 |I^*_{2^m}|. 
\end{eqnarray*}
\end{proof}

The following proposition is a simple consequence of the projection property for the Walsh system. 
It shows that if the information coefficients are all supported on a subset of $\{1,...,2^m\}$, 
then there is nothing to be gained by including Walsh functions with index higher than $2^m$ for the compensation. 
Thus, compensation in the Walsh system is a local problem for dyadic sets of integers. 
This is in strong contrast to the Fourier system, where, as can be seen in Theorem~\ref{finprojection}, it is necessary to 
include tones at a certain redundancy factor beyond the largest element of the subset.

We introduce two new terms. 
For $c\in l^2(I^*_{2^m})$ and $r\geq m$ we define 
\begin{eqnarray*}
\Interp(c,r,I^*_{2^m})=\{ f\;\in\; \span\{w_0,...,w_{2^r}\};\ c_k(f)=c_k\;\forall\; k\in L^*_{2^m}\} 
\end{eqnarray*}
and
\begin{equation*}
\underline{C}_{\Ex}(r,I^*_{2^m} )= \sup_{\|c\|_{l^2(I^*_{2^m})}}\left(\inf_{ f\;\in\; \Interp(c,r,I^*_{2^m})}\|f\|_{ L^\infty([0,1]) }\right).
\end{equation*}
\begin{proposition}\label{Cunder}
For all $r\geq m$, 
\begin{equation*}
\underline{C}_{\Ex}(r,I^*_{2^m} )= \underline{C}_{\Ex}(m,I^*_{2^m} ).
\end{equation*}
\end{proposition}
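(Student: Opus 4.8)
The plan is to establish the claimed equality by proving the two inequalities $\underline{C}_{\Ex}(r,I^*_{2^m}) \le \underline{C}_{\Ex}(m,I^*_{2^m})$ and $\underline{C}_{\Ex}(m,I^*_{2^m}) \le \underline{C}_{\Ex}(r,I^*_{2^m})$ for every $r \ge m$. The first is immediate from the nesting of the feasible sets: since $r \ge m$ gives $\span\{w_0,\dots,w_{2^m}\} \subset \span\{w_0,\dots,w_{2^r}\}$, every interpolant in $\Interp(c,m,I^*_{2^m})$ is also an interpolant in $\Interp(c,r,I^*_{2^m})$, so $\Interp(c,m,I^*_{2^m}) \subset \Interp(c,r,I^*_{2^m})$. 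Enlarging the set over which one takes the infimum can only decrease it, and taking the supremum over $c$ afterwards preserves this, which yields the first inequality.

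For the reverse inequality I would use the projection $P_{2^m}$ together with Theorem~\ref{walsh-proj}. Fix $c \in l^2(I^*_{2^m})$ and let $f \in \Interp(c,r,I^*_{2^m})$ be arbitrary, so $f \in \span\{w_0,\dots,w_{2^r}\}$ with $c_k(f)=c_k$ for all $k \in I^*_{2^m}$. Apply $P_{2^m}$: because every index in $I^*_{2^m}$ lies in the range of $P_{2^m}$, the operator leaves each coefficient $c_k$, $k \in I^*_{2^m}$, unchanged, so $P_{2^m}f$ lies in $\span\{w_0,\dots,w_{2^m}\}$ and still satisfies $c_k(P_{2^m}f)=c_k$ for all $k \in I^*_{2^m}$; that is, $P_{2^m}f \in \Interp(c,m,I^*_{2^m})$. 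By Theorem~\ref{walsh-proj} the projection is a contraction for the $\|\cdot\|_{L^\infty([0,1])}$-norm, so $\|P_{2^m}f\|_{L^\infty([0,1])} \le \|f\|_{L^\infty([0,1])}$. Hence the infimum of the $L^\infty$-norm over $\Interp(c,m,I^*_{2^m})$ is bounded by $\|f\|_{L^\infty([0,1])}$; taking the infimum over all $f \in \Interp(c,r,I^*_{2^m})$ and then the supremum over $c$ gives $\underline{C}_{\Ex}(m,I^*_{2^m}) \le \underline{C}_{\Ex}(r,I^*_{2^m})$.

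The only point requiring care — and the one I would single out as the main (mild) obstacle — is the application of Theorem~\ref{walsh-proj}. That theorem is phrased for the norm of $P_{2^m}$ as a map $\calC([0,1]) \to L^\infty([0,1])$, whereas the interpolants $f$ here are step functions and need not be continuous. This is harmless: as the proof of Theorem~\ref{walsh-proj} shows through the kernel identity~(\ref{333}), $P_{2^m}$ acts simply by averaging a function over each dyadic interval of length $2^{-m}$, i.e.\ it is the conditional expectation onto the dyadic partition of level $m$, and averaging is a contraction on $L^\infty([0,1])$ for every bounded measurable function, continuity being irrelevant. With this observation the contraction step is justified on the class of functions at hand, and combining the two inequalities completes the proof.
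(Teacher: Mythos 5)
Your proof is correct and follows essentially the same route as the paper: the nontrivial inequality $\underline{C}_{\Ex}(m,I^*_{2^m})\leq \underline{C}_{\Ex}(r,I^*_{2^m})$ is obtained exactly as in the paper's proof, by projecting an arbitrary interpolant from $\span\{w_0,\dots,w_{2^r}\}$ down with $P_{2^m}$ and invoking the contraction property from Theorem~\ref{walsh-proj}, the paper merely packaging this as a proof by contradiction. Your additional points --- the explicit inclusion argument for the easy inequality, the observation that $P_{2^m}$ preserves the coefficients on $I^*_{2^m}$, and the remark that the contraction bound extends from $\calC([0,1])$ to bounded measurable step functions because $P_{2^m}$ is averaging over dyadic intervals --- are refinements of presentation that the paper leaves implicit, not a different argument.
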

\begin{proof}
Suppose that 
\begin{equation}
\underline{C}_{\Ex}(r,I^*_{2^m} )< \underline{C}_{\Ex}(m,I^*_{2^m} ).\label{5432}
\end{equation}
Then for any $\epsilon>0$, 
\begin{equation*}
\inf_{ f\;\in\; \Interp(c,r,I^*_{2^m})}\|f\|_{ L^\infty([0,1]) }<\left( \underline{C}_{\Ex}(r,I^*_{2^m} )+\epsilon\right)\|c\|_{l^2(I^*_{2^m})}, 
\end{equation*}
and 
\begin{eqnarray*}
\inf_{ f\;\in\; \Interp(c,r,I^*_{2^m})}\|P_{2^m}f\|_{ L^\infty([0,1]) }&\leq& \inf_{ f\;\in\; \Interp(c,r,I^*_{2^m})}\|f\|_{ L^\infty([0,1]) }\\
&<&\left( \underline{C}_{\Ex}(r,I^*_{2^m} )+\epsilon\right)\|c\|_{l^2(I^*_{2^m})}, 
\end{eqnarray*}
which contradicts the assumption~(\ref{5432}). 
\end{proof}
\end{subsection}

\begin{subsection}{Matrix Results Related to the Walsh System}\label{matrixsection}
We give a result for matrices that is analogous to the DFT matrix for the Walsh system. 
As we have seen in throughout this section and the last, the Walsh system differs from the Fourier system in that results for the Walsh 
system hold for the dyadic sets of integers and do not require the redundancy needed in the Fourier case. 
This is seen again by comparing Proposition~\ref{hadamard} here with Proposition~\ref{solvelambda} in the Fourier case. 
\begin{definition}
The Rademacher matrix $R_k:\C^{k}\rightarrow \C^{2^k}$ is defined by 
\begin{equation*}
(R_{k})_{i,j}=\frac{1}{\sqrt{2^k}}r_{j}\left(\frac{i}{2^k}\right).
\end{equation*}
The \emph{Hadamard matrices} are defined inductively 
as follows
\begin{equation*}
H_1=[1],\hspace{2cm}
H_{2}=\frac{1}{\sqrt{2}}\left[ \begin{array}{lr}
1&1\\
1&-1
\end{array}
\right]
\end{equation*}
and
\begin{equation*}
H_{2^{k+1}}=\frac{1}{\sqrt{2^{k+1}}}\left[ \begin{array}{lr}
H_{2^k}&H_{2^k}\\
H_{2^k}&-H_{2^k}
\end{array}
\right]
\end{equation*}
for $k=2,3,...$.
\end{definition}
The Hadamard matrices are orthogonal and correspond to the Walsh system in that 
\begin{equation*}
(H_{2^k})_{i,j}=\frac{1}{\sqrt{2^k}}w_{j}(\frac{i}{2^k}).
\end{equation*}
Moreover, the Walsh system can be obtained from the Haar system by multiplying the Haar basis elements, represented as finite vectors, 
by the appropriate size Hadamard matrix. 
See section 1.4 of \cite{SWS90}. 

\begin{proposition}\label{hadamard}
There exists a constant $B_1$ such that for all $k$ 
\begin{equation*}
\|a\|_{l^2_k} \leq \frac{B_1}{\sqrt{2^k}} \|R_k a\|_{l^1_{2^k}}.
\end{equation*}
And, in dimension $2^k$, for all vectors $a$ supported on the 
set $D=1,2,4,8,....,2^k$, 
\begin{equation*}
\|a\|_{l^2_{2^k}} \leq \frac{B_1}{\sqrt{2^k}} \|H_{2^k} a\|_{l^1_{2^k}}.
\end{equation*}
\end{proposition}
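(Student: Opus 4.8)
The plan is to derive both inequalities from Khintchine's Inequality (Theorem~\ref{Khin}) together with a single quadrature identity that converts the discrete $l^1$ norm of a sampled vector into the continuous $L^1$ norm of the underlying Rademacher/Walsh combination. The quadrature observation is this: let $M$ be either $R_k$ or $H_{2^k}$, and suppose a function $h$ is constant, with value $v_i$, on each dyadic interval $J_i=[(i-1)2^{-k},i2^{-k})$, $i=1,\dots,2^k$, where $v_i=\sqrt{2^k}\,(Ma)_i$. Then
\[
\|h\|_{L^1([0,1])}=\sum_{i=1}^{2^k}|v_i|\,|J_i|=\frac{1}{2^k}\sum_{i=1}^{2^k}|v_i|=\frac{1}{\sqrt{2^k}}\|Ma\|_{l^1_{2^k}}.
\]
All the functions we sample (the first $k$ Rademacher functions and the dyadic Walsh functions $w_{2^l}$) are constant on intervals of length $2^{-k}$, so this applies verbatim.

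First I would prove the Rademacher estimate. Set $g=\sum_j a_j r_j$ to be the combination of the $k$ Rademacher functions $r_0,\dots,r_{k-1}$ that constitute the columns of $R_k$, so that the value of $g$ on $J_i$ is $\sqrt{2^k}\,(R_k a)_i$. Orthonormality of the $r_j$ gives $\|a\|_{l^2_k}=\|g\|_{L^2([0,1])}$, and Khintchine's Inequality with $p=1$ gives $\|g\|_{L^2}\le B_1\|g\|_{L^1}$. Feeding the quadrature identity (with $h=g$, $M=R_k$) into this chain yields $\|a\|_{l^2_k}\le B_1\|g\|_{L^1}=\frac{B_1}{\sqrt{2^k}}\|R_ka\|_{l^1_{2^k}}$, which is the first claim.

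For the Hadamard estimate I would use the correspondence $(H_{2^k})_{i,j}=2^{-k/2}w_j(i/2^k)$ stated above. For $a$ supported on $D$, the function $f=\sum_{l=0}^{k}a_{2^l}w_{2^l}$ is precisely the combination whose value on $J_i$ is $\sqrt{2^k}\,(H_{2^k}a)_i$, and it is constant on intervals of length $2^{-k}$ because $w_{2^l}=r_0r_1\cdots r_{l-1}$ is. Hence the same quadrature gives $\|f\|_{L^1}=\frac{1}{\sqrt{2^k}}\|H_{2^k}a\|_{l^1_{2^k}}$, while orthonormality of the Walsh functions gives $\|a\|_{l^2_{2^k}}=\|f\|_{L^2}$. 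Thus everything reduces to the Khintchine bound $\|f\|_{L^2}\le B_1\|f\|_{L^1}$ for the dyadic Walsh system with the \emph{same} constant $B_1$.

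This last reduction is the main obstacle, since Theorem~\ref{Khin} is literally a statement about the $r_i$, whereas the $w_{2^l}$ are products of them and the sum includes the constant $w_1\equiv1$. I would close the gap distributionally: writing $t=0.b_1b_2\cdots$ in binary gives $r_i(t)=(-1)^{b_{i+1}}$, so $w_{2^l}(t)=(-1)^{b_1+\cdots+b_l}$, and the linear map $(b_1,\dots,b_k)\mapsto(b_1,\;b_1\oplus b_2,\;\dots,\;b_1\oplus\cdots\oplus b_k)$ is a measure-preserving bijection of the dyadic coordinates under which $w_2,\dots,w_{2^k}$ become genuinely independent Rademacher functions. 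Consequently the joint law of $(w_2,\dots,w_{2^k})$ equals that of $(r_1,\dots,r_k)$ and every $L^p$ norm of a combination is unchanged. The constant term $a_1w_1$ is absorbed by a symmetrization step: since $S:=\sum_{l\ge1}a_{2^l}w_{2^l}$ is symmetric (its law is invariant under negation), introducing an auxiliary independent sign $r_0$ gives $\|a_1\cdot1+S\|_{L^1}=\tfrac12E|a_1+S|+\tfrac12E|{-a_1}+S|=E|a_1+S|=\|a_1 r_0+S\|_{L^1}$, so the full Khintchine bound for the pure Rademacher sum $a_1 r_0+S$ (which includes the $|a_1|^2$ term) applies with the same $B_1$. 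Combining this with the two norm identities above completes the proof; the distributional change of variables and the symmetrization are the only nonroutine points, the rest being bookkeeping in the quadrature rule.
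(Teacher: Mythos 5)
Your proof is correct, and for the Rademacher inequality it is essentially the paper's proof verbatim: orthonormality, Khintchine at $p=1$, and exactness of the dyadic quadrature for step functions. The value you add is in the Hadamard half. There the paper is much terser: it only records the $L^2$ identity $\bigl\|\sum_{i=1}^{2^k}a_iw_i\bigr\|_{L^2}=\bigl\|\sum_{i=1}^{k}a_{2^i}r_i\bigr\|_{L^2}$ and declares that ``the same calculation proves the claim.'' That one line has two holes which your argument plugs. First, equality of $L^2$ norms by itself cannot transport Khintchine's inequality: one must know that the $L^1$ norm of the dyadic Walsh sum also agrees with that of a genuine Rademacher sum, and this is exactly what your measure-preserving prefix-XOR change of variables establishes (the telescoping products $w_{2^l}=r_0\cdots r_{l-1}$ are jointly distributed as independent signs, so \emph{all} $L^p$ norms transfer, not just $L^2$). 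Second, the paper's displayed identity silently drops the coefficient $a_1$ of the constant function $w_1$, which does belong to $D$ (as written the identity is false when $a_1\neq 0$); your symmetrization step, replacing $a_1$ by $a_1\epsilon$ with an auxiliary independent sign $\epsilon$ at no cost to either norm, absorbs the constant term with the same constant $B_1$. So you follow the paper's route but close a genuine gap in its second half. One caveat you share with the paper: with the convention $\mathrm{sign}\,0=-1$, literal sampling at the points $i/2^k$ lands on discontinuities of the $r_j$, so in both proofs the matrices $R_k$ and $H_{2^k}$ must be read as recording the constant values of the functions on the dyadic cells rather than raw point samples; this is a convention issue in the paper's definitions, not a flaw in your reasoning.
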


\begin{proof}
We use that $r_1,...,r_k$ are constant on the intervals $(\frac{j-1}{2^k},\frac{j}{2^k} ]$ 
for $j=1,...,2^k$. 
Using the Khintchine inequality, we have 
\begin{eqnarray*}
\|a\|_{l^2_k}&=&\left\|\sum_{i=1}^n a_ir_i\right\|_{L^2([0,1])}\\
& \leq& B_1 \left\|\sum_{i=1}^n a_ir_i\right\|_{L^1([0,1])}\\
&=&B_1 \int_0^1\left| \sum_{i=1}^k a_ir_i(t)\right|dt\\
&=&B_1 \frac{1}{2^k}\sum_{l=1}^{2^k} \left| \sum_{i=1}^k a_ir_i( \frac{l}{2^k})\right|\\
&=& B_1 \frac{1}{\sqrt{2^k}}\sum_{l=1}^{2^k} \left|\frac{1}{\sqrt{2^k}} \sum_{i=1}^k a_ir_i( \frac{l}{2^k})\right|\\
&=& \frac{B_1}{\sqrt{2^k}}\|R_k a\|_{l^1_{2^k}}. 
\end{eqnarray*}
For the second statement, we simply note that for $a$ supported on 
$1,2,4,...,2^k$, 
\begin{equation}
\left\|\sum_{i=1}^{2^k} a_iw_i\right\|_{L^2([0,1])}=\left\|\sum_{i=1}^k a_{2^i}r_i\right\|_{L^2([0,1])},
\end{equation} 
so that the same calculation proves the claim. 
\end{proof}
\end{subsection}
\end{section}

\begin{section}{Conclusion and Discussion}
We have provided a contribution towards understanding the relationship between the peak values of a signal and the 
proportion of orthonormal signals that can be used for 
information transmission when using tone reservation for PAPR reduction. 
Our results show that for the two most common wireless systems, OFDM and DS-CDMA, a strict amplitude constraint requires that the proportion of 
signals used to carry information must decrease as the total number of signals used increases when using tone reservation. 

One could naively ask if this is the case for all orthonormal systems. 
However, we gave examples for both the Fourier and the Walsh case of subsequences, such that the corresponding subspaces have the norm 
equivalence, and thus the PAPR reduction problem is solvable for the infinite subsequence. 
By simply rearranging the original basis by alternatingly taking one function from the special subsequence and one from its complement, 
one has solvability on a subset with density $1/2$. 
We have seen, though, that when restricted to finite sets, one no longer has solvability in this setting. 
Thus, the behavior depends on the properties of the finite set. 
This can be seen for Walsh functions in the matrix setting as well. 
Suppose that one alternatingly selected a Rademacher function and a Walsh function that is not a Rademacher function and represented 
them as columns in a matrix. 
The number of rows would then grow exponentially with respect to the number of columns. 
Thus, the norm equivalence would not occur on spaces of the same dimension or even proportional dimension. 

In the Walsh case, we have seen that the three orderings for the Walsh system all yield the same result. 
This is because the necessary properties, namely that products only permute the functions within the appropriate dyadic block, 
are common to all the orderings. 
We state the informal conjecture that, as far as the topics addressed here are concerned, 
any basis with a useful structure (and uniformly bounded) will behave similarly to the Fourier 
and Walsh bases.

\end{section}
\vspace{.5cm}
\noindent{\bf \large Acknowledgment}

\noindent We note that the PAPR problem for CDMA systems was posed by Bernd
Haberland and Andreas Pascht of Bell Labs in 2000, and the first author thanks them for discussions since 
that time. 
We thank Andreas Kortke of the Technische Universit\"at Berlin and Wilhelm Keusgen 
of the Heinrich Hertz Institut for discussions 
concerning power amplifiers, in particular as part of the  Smart Radio Frontend Project. 
We thank Gerhard Wunder of the Heinrich Hertz Institut for help with the PAPR literature.
The approach taken to solve the PAPR reduction problem for CDMA was motivated by 
work of Timothy Gowers on quasirandom groups~\cite{Gow08}, which he  
discussed at his Institute for Advanced Studies lecture in summer 2010. 
The first author thanks the IAS for its hospitality during 
the  Workshop on Pseudorandomness in Mathematics 2010. 
Lastly, we mention that parts of this paper were presented in the first author's  Bell Labs Lecture in late 2010.

\def\cprime{$'$} \def\cprime{$'$} \def\cprime{$'$} \def\cprime{$'$}


\end{document}